\numberwithin{equation}{section}
\tikzstyle{every picture}+=[remember picture]
\newtheorem{thm}{Theorem}[section]
\newtheorem{corollary}[thm]{Corollary}
\newtheorem{lemma}[thm]{Lemma}
\newcommand{\f}{\mathcal{F}}
\newcommand{\bra}{\langle}
\newcommand{\ket}{\rangle}
\newcommand{\kb}{ \ket \bra }
\newcommand{\indP}{x}
\newcommand{\indQ}{y}
\newcommand{\indR}{z}
\DeclareMathOperator{\tr}{Tr} 
\DeclareMathOperator{\Cat}{Cat}
\newcommand{\N}{\mathbb{N}}
\newcommand{\R}{\mathbb{R}}
\newcommand{\C}{\mathbb{C}}
\newcommand{\E}{\mathbb{E}}
\newcommand{\F}{\mathcal{F}}
\newcommand{\SC}{\operatorname{SC}}
\DeclareMathOperator{\Tr}{Tr}
\DeclareMathOperator{\trace}{Tr}
\DeclareMathOperator{\id}{id}
\DeclareMathOperator*{\Ex}{\mathbb{E}}
\DeclareMathOperator{\NC}{NC}
\DeclareMathOperator{\dist}{dist}
\begin{document}

\subjclass[2010]{%
60B20
{} (Primary)
81P40
, 
05A15
{} (Secondary)}

\keywords{quantum states, entanglement, random matrices, Wishart matrix, partial transpose, meanders}

\title[Partial transpose of random quantum states]{Partial transpose of random quantum states: exact formulas and meanders}  

\author{Motohisa Fukuda}
\address{Zentrum Mathematik, M5, 
Technische Universität München, \linebreak Boltzmannstrasse 3, \mbox{85748 Garching,} Germany}
\email{m.fukuda@tum.de}

\author{Piotr \'Sniady}
\address{Zentrum Mathematik, M5,
Technische Universität München, \linebreak
Boltzmannstrasse 3,
85748 Garching, Germany \newline \indent
Institute of Mathematics, Polish Academy of Sciences, \linebreak
\mbox{ul.~\'Sniadec\-kich 8,} 00-956 Warszawa, Poland
 \newline
\indent 
Institute of Mathematics,
University of Wroclaw,  \linebreak \mbox{pl.\ Grunwaldzki~2/4,} 50-384
Wroclaw, Poland
} 
\email{Piotr.Sniady@tum.de, Piotr.Sniady@math.uni.wroc.pl}

\begin{abstract}
We investigate the asymptotic behavior of 
the empirical eigenvalues distribution of
the partial transpose of a random quantum state. 
The limiting distribution was previously investigated 
via Wishart random matrices indirectly (by approximating the matrix of trace $1$ by the Wishart matrix of random trace)
and shown to be the semicircular distribution 
or the free difference of two free Poisson distributions,
depending on how dimensions of the concerned spaces grow. 
Our use of Wishart matrices gives exact combinatorial formulas for the moments of the partial transpose of the random state.
We find three natural asymptotic regimes in terms of geodesics on the permutation groups. 
Two of them correspond to the above two cases; the third one turns out to be a new matrix model for the meander polynomials. 
Moreover, we prove the convergence to the semicircular distribution together with its extreme eigenvalues
under weaker assumptions, and 
show large deviation bound for the latter.  
\end{abstract}

\maketitle

\section{Introduction}     
In this paper, we investigate asymptotic behavior of 
the empirical eigenvalues distribution of the
\emph{partial transpose} of the random \emph{quantum state} (positive semidefinite Hermitian matrix of trace one). 
This problem originated from the field of \emph{quantum information theory} in relation to detecting \emph{entanglement} in a bipartite system. 
Non-entangled states, called \emph{separable} states,  are necessarily positive semidefinite after partial transpose \cite{Peres1996},
where the latter property is called \emph{Positive Partial Transpose}, abbreviated as \emph{PPT}. 
The converse statement is not true except for bipartite states on $\C^2 \otimes \C^2$ and $\C^2 \otimes \C^3$ \cite{Horodecki1996}. 
Here, the partial transpose is made by
writing the bipartite matrix as a Kronecker product and transposing each block. 
Therefore, the generic eigenvalue distribution of the partial transpose of a random quantum state
is interesting and especially the behavior of the minimal eigenvalue is important.  

Mathematically, we investigate the following problem. 
Take three complex vector spaces $\C^l$, $\C^m$ and $\C^n$ with $l,m,n \in \N$ and
define $\rho:= \trace_{\C^l} |v\rangle\langle v|$ for 
a uniformly random unit vector $|v\rangle$ in the product space $\C^l \otimes \C^m \otimes\C^n$.
Space $\C^l$ is called the \emph{environment}, space $\C^m \otimes \C^n$ is called the \emph{system}, 
and individual spaces $\C^m$ and $\C^n$ correspond to individual parts of the bipartite system. 
We will present the details of this construction in Section \ref{subsec:our-model}.
This way of inducing measure on mixed quantum states was investigated in
\cite{ZyczkowskiSommers2001, BengtssonZyczkowski2006}.  
Then, we study the asymptotic behavior of the eigenvalues of its partial transpose $\rho^{\Gamma}$
where the transpose acts only on the space $\C^n$.

Although quantum states correspond to positive semidefinite Hermitian matrices of trace one,
Aubrun in \cite{Aubrun2012} used the normal Wishart matrix model
to approximate a random quantum state. The trace of such a Wishart matrix is a random variable which converges to one only asymptotically.
Aubrun showed that 
the empirical eigenvalues distribution converges to the semicircular distribution 
as the dimension of the spaces grow in such a way that
$l \propto mn$ and $m \propto n$. 
Later, Banica and Nechita \cite{BanicaNechita2012,BanicaNechita2012a} showed with the same model that
the limiting distribution is the free difference of two free Poisson distributions
 in the regime where
the dimension $m$ of one of the parts of the system is fixed and the dimension $l$ of environment and the dimension $n$ of the other system grow proportionally.  

By contrast, we look into this problem more directly by considering the matrix $\rho:=\frac{1}{\Tr W} W$, where $W$ is a Wishart matrix.
Since the trace $\Tr W$ is a $\chi^2$-random variable, independent from $\rho$, the problem of calculating moments of (the partial transpose of) $\rho$ is reduced to calculating appropriate moments of the Wishart matrix $W$. This idea was sketched briefly already by Aubrun \cite[Section 8.2]{Aubrun2012}. In this way we obtain exact combinatorial formulas involving summation over the symmetric group.
We find three natural types of geodesics in the Cayley graph of the symmetric group which
yield interesting asymptotic distributions. 
Two of them correspond to the above mentioned cases (\cite{Aubrun2012} and \cite{BanicaNechita2012,BanicaNechita2012a}),
and the remaining new case turns out to be related to the \emph{meander polynomials} \cite{DiFrancescoGolinelliGuitter}. 

Our paper is organized as follows. 
After explaining necessary mathematical techniques (in particular, free probability) and
our precise mathematical model in Section \ref{preliminary}, 
we analyze the regime, where $l,m,n$ grow such that $l\propto mn$ in Section \ref{non-singular}.
In \cite{Aubrun2012} one requires $m \propto n$ but we drop this condition to get the same limiting measure in Section \ref{limit-dist},
although we need some weak conditions to show the convergence of  extreme eigenvalues and their large deviation 
in Sections \ref{convergence-extreme} and \ref{convergence-speed}. 
Then, it is shown in Section \ref{fixed-env} that our random matrix model yields the meander polynomials.
The connection to free Poisson distribution is presented in Section \ref{fixed-transpose}. 
Section \ref{conclusion} contains the concluding remarks.

\section{Preliminaries}\label{preliminary} 

\subsection{Free probability, noncrossing partitions and permutations}

\begin{figure}[tbp]
\centering
\ifpdf
\includegraphics{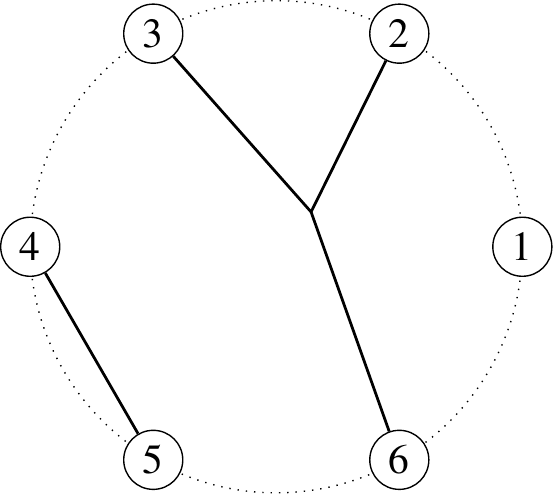}
\else
\includegraphics{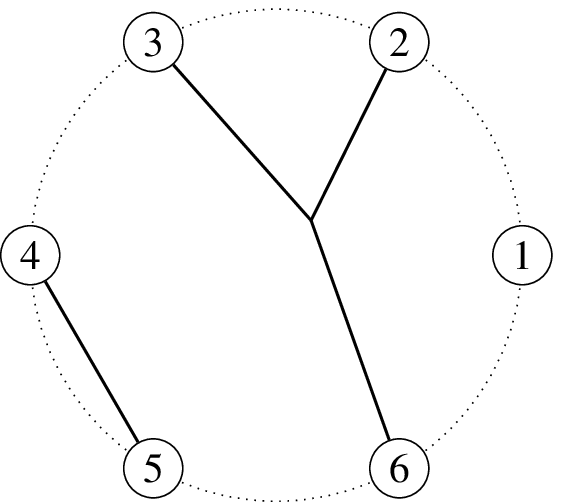}
\fi
\caption{A graphical representation of a noncrossing partition $\tau=\big\{ \{1\}, \{2,3,6\}, \{4,5\} \big\}$.}
\label{fig:noncrossing}
\end{figure}

In this paper we will use some basic notions from free probability theory. A good treatment of this topic is given in the book \cite{NicaSpeicher}.
We will recall briefly the most important notions.

\subsubsection{Noncrossing partitions}
A \emph{noncrossing partition} $\tau$ is a partition of the set $[p]:=\{1,\dots,p\}$ with a property that if $a < b < c <d \in [p]$ are such that $a, c$ belong to the same block of $\tau$ and $b, d$ belong to the same block of $\tau$ then $a,b,c,d$ belong all to the same block of $\tau$. Noncrossing partitions can be represented graphically as noncrossing connections between points arranged on a circle, see Figure \ref{fig:noncrossing}. 

The set of noncrossing partitions of $[p]$ will be denoted by $\NC(p)$.
We also use the notation
\begin{equation*}
\NC_{i_1, \ldots , i_l}(p) :=  \big\{\tau \in \NC(p) : 
|c| \in \{i_1, \ldots , i_l\} \quad \forall c \in \tau\big\}
\end{equation*}
for the set of noncrossing partitions of $[p]$ with a restriction on sizes of the blocks.

\subsubsection{Permutations}
We denote the permutation group of $p$ elements by $S_p$. 
For a permutation $\alpha \in S_p$ we define
$\# \alpha$  to be the number of cycles in $\alpha$ and define its \emph{length}
$|\alpha|$ as the minimum number of factors necessary to write $\alpha$ as a product of  transpositions (we are allowed to use \emph{arbitrary} transpositions, not only Coxeter generators).  The Cayley graph of the symmetric group (where as generators we take \emph{all} transpositions, not only Coxeter generators) defines the distance
\[\dist(\alpha,\beta):=|\alpha^{-1}\beta|\]  on $S_p$
and  
\[\# \alpha = p -|\alpha| \] holds for all $\alpha \in S_p$. 

For given permutations $\alpha,\gamma\in S_p$ we define \emph{the geodesic $\alpha\rightarrow\gamma$} as 
the set of all permutations which are on the geodesics between $\alpha$ and $\gamma$:
\[
\alpha\rightarrow\gamma := \big\{\beta \in S_p :  \dist(\alpha,\beta)+\dist(\beta,\gamma) = \dist(\alpha, \gamma)\big\}.
\]
If $\beta$ belongs to this geodesic, we denote it by $\alpha\rightarrow\beta\rightarrow\gamma$. 

\subsubsection{Noncrossing partition and permutations}

We will recall now the construction of Biane \cite{Biane1997}.
We consider the canonical full cycle 
\begin{equation}
\label{eq:pi}
 \pi=\pi_p:=(1,2,\dots,p)\in S_p.  
\end{equation}
For a given partition $\tau$ of $[p]$ and $i\in[p]$ we define
$\big( t(\tau) \big) (i)\in[p]$ to be the element in the same block of $\tau$ as $i$ which is \emph{after} $i$ (with respect to the cyclic order), see Figure \ref{nc-counterclockwise}. Formally, $\big( t(\tau) \big) (i)$ is the first element of the sequence $\pi(i), \pi^2(i)=\pi\big( \pi(i) \big), \dots$ which belongs to the same block of $\tau$ as $i$. It is easy to check that $t(\tau)\in S_p$.

\begin{figure}[tbp]
\centering
\subfloat[]{
\label{nc-counterclockwise}
%
\ifpdf
\includegraphics{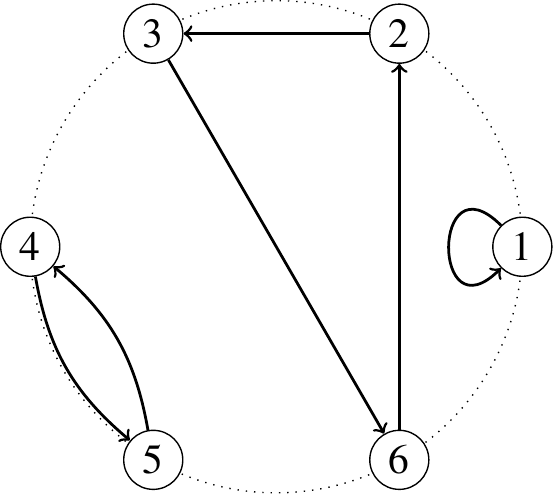}
\else
\includegraphics{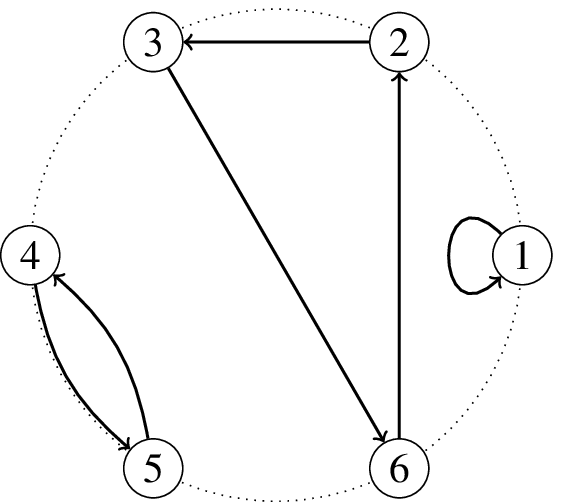}
\fi
}
\hfill
\subfloat[]{
\label{nc-clockwise}
\ifpdf
\includegraphics{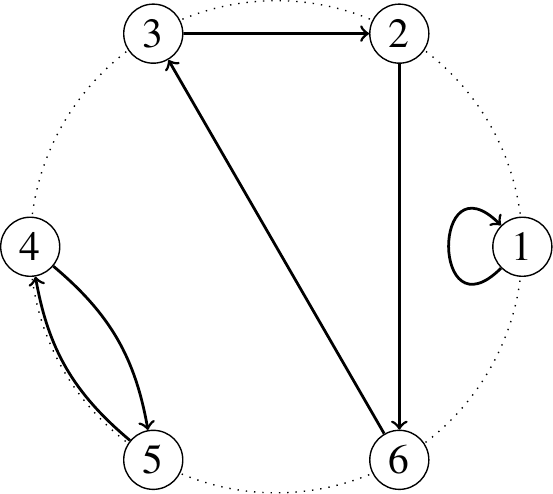}
\else
\includegraphics{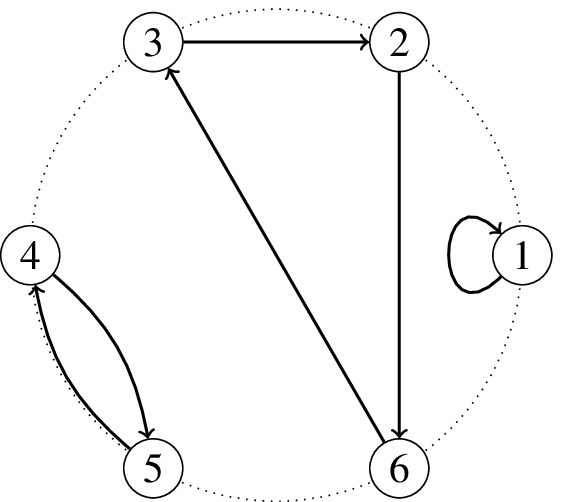}
\fi
}
\caption{\protect\subref{nc-counterclockwise} 
Graphical representation of the permutation $t(\tau)=(1)(236)(45)$ from the geodesic $\id \rightarrow \pi=(12\dots6)$ corresponding to the noncrossing partition $\tau$ from Figure \ref{fig:noncrossing}. \newline
\protect\subref{nc-clockwise}
Graphical representation of the permutation $\big( t(\tau)\big)^{-1}=(1)(632)(54)$ from the geodesic $\id\rightarrow \pi^{-1}=(654\dots1)$ corresponding to the noncrossing partition $\tau$ from Figure \ref{fig:noncrossing}.
}
\label{fig:permutations}
\end{figure}

It is easy to check that 
$\big( t(\tau) \big)^{-1} (i)\in[p]$ is the element in the same block of $\tau$ as $i$ which is \emph{before} $i$ (with respect to the cyclic order), see Figure \ref{nc-clockwise}.

The relationship between noncrossing partitions and geodesics in the Cayley graph of the symmetric group is given by the following result. Comparison of Figure \ref{fig:noncrossing} and Figure \ref{fig:permutations}
 is probably the best way to give an intuitive meaning to this relationship.

\begin{lemma}
\label{lem:Biane}
\
\begin{enumerate}[label=\emph{\alph*})]
\item \label{item:BianeA}
The map $\tau\mapsto t(\tau)$ is a bijective correspondence between $\NC(p)$ and the geodesic $\id\rightarrow \pi$.

\item \label{item:BianeB}
The map $\tau\mapsto \big( t(\tau) \big)^{-1}$ is a bijective correspondence between $\NC(p)$ and the geodesic $\id\rightarrow \pi^{-1}$.

\item \label{item:BianeC}
The map $\tau\mapsto t(\tau)=\big( t(\tau) \big)^{-1}$ is a bijective correspondence between $\NC_{1,2}(p)$ and the intersection of geodesics $\big( \id\rightarrow \pi\big) \cap \big( \id\rightarrow \pi^{-1}\big)$.
\end{enumerate}
\end{lemma}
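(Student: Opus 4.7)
The plan is to reformulate the geodesic condition in terms of cycle counts using the identity $\#\alpha = p - |\alpha|$, which turns $\alpha \in (\id\rightarrow\pi)$ into
\[
\#\alpha + \#(\alpha^{-1}\pi) = p+1.
\]
For part~\ref{item:BianeA}, the first step is a direct structural observation: by the very construction of $t(\tau)$, each block of $\tau$ becomes a cycle of $t(\tau)$, traversed counterclockwise along the circle, so $\#t(\tau)$ equals the number of blocks of $\tau$. The key step is then to identify the cycle structure of $t(\tau)^{-1}\pi$; I would argue that its cycles correspond precisely to the blocks of the Kreweras complement $K(\tau)$, i.e.\ the coarsest noncrossing partition placed in the gaps between consecutive integers (arranged cyclically) for which $\tau \cup K(\tau)$ remains noncrossing. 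Granting this, the classical Kreweras identity $|\tau|+|K(\tau)|=p+1$ yields exactly $\#t(\tau)+\#(t(\tau)^{-1}\pi)=p+1$, placing $t(\tau)$ on the geodesic. Injectivity of $\tau\mapsto t(\tau)$ is immediate because $\tau$ can be read off from the cycle decomposition of $t(\tau)$; surjectivity then follows by cardinality, since both $\NC(p)$ and the geodesic $\id\rightarrow\pi$ have size equal to the Catalan number $C_p$ (the latter being a classical count of minimal transitive factorizations of a long cycle).

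Part~\ref{item:BianeB} reduces to part~\ref{item:BianeA} by the simple observation that $\alpha\in \id\rightarrow\pi$ if and only if $\alpha^{-1}\in \id\rightarrow\pi^{-1}$, since $|\sigma|=|\sigma^{-1}|$ and the length of a commutator-like product $\sigma\tau$ equals that of $\tau\sigma$ (they are conjugate). Part~\ref{item:BianeC} then follows by intersecting the two bijections: an element belongs to both geodesics exactly when it equals its own inverse, i.e., is an involution, and under the correspondence of part~\ref{item:BianeA} this corresponds precisely to those $\tau$ all of whose blocks have size $1$ or $2$, namely $\tau\in\NC_{1,2}(p)$.

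The main obstacle will be the cycle-structure computation for $t(\tau)^{-1}\pi$ and its identification with the Kreweras complement. While pictorially transparent from Figure~\ref{fig:permutations}, a rigorous argument requires tracking how $t(\tau)^{-1}$ composed with the shift $\pi$ produces cycles that live in the gaps of $\tau$ on the circle, and checking that these gap-cycles themselves form a noncrossing partition. This is the combinatorial heart of the lemma, essentially Kreweras's theorem reinterpreted through the bijection $t$; once it is in place, the remaining steps are either cardinality bookkeeping or formal manipulations with inverses.
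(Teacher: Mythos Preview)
Your proposal is correct. The paper handles parts \ref{item:BianeA} and \ref{item:BianeB} by simply citing Biane \cite{Biane1997}, whereas you sketch a self-contained argument via the Kreweras complement and a Catalan-number cardinality match; this is a legitimate alternative and is in fact close to how Biane's bijection is usually established. (One quibble: the cardinality of the geodesic $\id\to\pi$ does equal $C_p$, but the phrase ``a classical count of minimal transitive factorizations of a long cycle'' is misleading---that phrase normally refers to the Hurwitz number $p^{p-2}$, not the Catalan number.)

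For part \ref{item:BianeC} your argument and the paper's are essentially the same, but the paper spells out the step you compress into ``an element belongs to both geodesics exactly when it equals its own inverse.'' The forward implication is immediate from \ref{item:BianeB}. The backward implication is the nontrivial one: if $\sigma$ lies on both geodesics then both $\sigma$ and $\sigma^{-1}$ lie on $\id\to\pi$, hence by \ref{item:BianeA} we have $\sigma=t(\tau)$ and $\sigma^{-1}=t(\tau')$ for some $\tau,\tau'\in\NC(p)$; since the orbit partition of $t(\tau)$ is $\tau$ itself (your injectivity observation from \ref{item:BianeA}) and $\sigma,\sigma^{-1}$ share the same orbits, we get $\tau=\tau'$ and therefore $\sigma=\sigma^{-1}$. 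You have all the ingredients, but the inference is not automatic and the paper writes it out explicitly.
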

\begin{proof}
Part \ref{item:BianeA} was proved by Biane \cite{Biane1997}.
Part \ref{item:BianeB} follows in a similar way be replacing permutation $\pi$ by $\pi^{-1}$.

We will prove now part \ref{item:BianeC}. If $\tau\in\NC_{1,2}(p)$ then each cycle of $t(\tau)$ has length $1$ or $2$, thus $t(\tau)=\big( t(\tau) \big)^{-1}$. From part \ref{item:BianeA} and \ref{item:BianeB} it follows that $t(\tau)=\big( t(\tau) \big)^{-1}$ belongs to each of the two geodesics.

In order to show surjectivity, let $\sigma \in \big( \id \rightarrow \pi\big) \cap \big( \id \rightarrow \pi^{-1}\big)$. From part \ref{item:BianeA} we know that there exists $\tau\in\NC(p)$ such that $t(\tau)=\sigma$. From part \ref{item:BianeB} we know that there exists $\tau'\in\NC(p)$ such that $\big( t(\tau') \big)^{-1}=\sigma$.
Since permutations $t(\tau)$ and $t(\tau')$ differ just by orientation of the cycles, it follows that $\tau=\tau'$. Thus $t(\tau) = \sigma = \big( t(\tau) \big)^{-1}$; it follows that $\sigma$ is an involution, therefore each block of $\tau$ consists of $1$ or $2$ elements. Therefore we showed existence of $\tau\in\NC_{1,2}(p)$ such that $t(\tau)=\sigma$. 
\end{proof}

\subsubsection{Genus functions}
\label{subsubsec:genus}
It will be convenient to consider the following two non-negative, integer functions on $S_p$ given by:
\begin{align*}
2g_p^{(1)}(\alpha) &:= \dist(\id ,\alpha)+\dist(\alpha,\pi^{-1})-\dist(\id ,\pi^{-1}),    \\
2g_p^{(2)}(\alpha) &:= \dist(\id ,\alpha)+\dist(\alpha,\pi)-\dist(\id ,\pi),   
\end{align*}
for $\alpha \in S_p$.
They are called \emph{genus functions}; they measure how  
the paths via $\alpha$ are longer than the geodesic distance between
$\id$ and $\pi^{-1}$ or $\pi$.

\subsubsection{Free cumulants}

For a probability measure $\mu$ on the real line $\R$ having all moments finite we consider its \emph{free cumulants} $\big( k_p(\mu) \big)_{p=1,2,\dots}$ given by the following implicit relationship with the moments:
\begin{equation} 
\label{eq:moment-cumulant}
m_p(\mu) := \int x^p \ d\mu(x) = \sum_{\tau\in\NC(p)} \prod_{b\in\tau} 
k_{|b|}(\mu).
\end{equation}
For example,
\begin{align*}
 m_1 = & k_1, \\
 m_2 = & k_2+ k_1^2, \\
 m_3 = & k_3+3 k_1 k_2+k_1^3,\\
 m_4 = & k_4+4 k_1 k_3+2 k_2^2+6 k_1^2 k_2+k_1^4.
\end{align*}
Free cumulants are a fundamental tool of the combinatorial approach to free probability theory \cite{NicaSpeicher}.

\subsubsection{Semicircular distribution}   
The \emph{semicircular distribution} with mean $M$ and standard variation $\sigma$,
which will be denoted by $\SC_{M,\sigma}$, has the following density:
 \[
\frac{d \SC_{M,\sigma}}{dx} = \frac{1}{2\pi\sigma^2}\sqrt{4\sigma^2 - (x-M)^2}
\qquad \text{for } |x-M| \leq 2 \sigma.
\]
This measure has a compact support $[M-2\sigma,M+2\sigma]$. 
The free cumulants of this measure are given by
\[
k_p(\SC_{M,\sigma}) = 
\begin{cases}
M        & \text{if }p=1, \\
\sigma^2 & \text{if }p=2, \\
0        & \text{otherwise.}
\end{cases} 
\]
In the special case when $M=1$, the moment-cumulant formula \eqref{eq:moment-cumulant} implies that the moments of this measure are given by:

\[ m_p(\SC_{1,\sigma}) = \sum_{\tau \in \NC_{1,2}(p)} \sigma^{2\ (\text{$\#$ of blocks of length $2$ in $\tau$})}.
\]
Note that any $\tau\in\NC_{1,2}(p)$ can be identified with an involution $\tau:=t(\tau)=\big( t(\tau) \big)^{-1}\in S_p$.
By this identification, we have
\begin{equation}
\label{moment_shift2}
m_p(\SC_{1,\sigma}) = \sum_{\tau \in \NC_{1,2}(p)} \sigma^{2\ |\tau|}. 
 \end{equation}

\subsubsection{Free Poisson distribution} 
Let $\lambda \geq 0$ and $\alpha \in \R$. 
Then,
the \emph{free Poisson distribution} with rate $\lambda$ and jump-size $\alpha$
is defined to have the following probability density $\nu$ on $\R$: 
\begin{equation}\label{Poisson}
\nu_{\lambda,\alpha} = 
\begin{cases} (1-\lambda) \delta_0 + \tilde\nu & \text{if $0\leq \lambda <1$,} \\ 
\tilde \nu&\text{if $1<\lambda <\infty$.}
\end{cases}
\end{equation}
Here, $\tilde \nu_{\lambda,\alpha}$ is the measure supported on the interval
$\left[\alpha (1-\sqrt{\lambda})^2, \alpha (1+\sqrt{\lambda})^2\right]$
with the density:
\[
\tilde \nu_{\lambda,\alpha} (t) =\frac{1}{2\pi\alpha t} \sqrt{4\lambda\alpha^2 - (t- \alpha(1+\lambda))^2}\ dt. 
\]
Importantly, free cumulants of this distribution are particularly simple: 
$$k_p = \lambda \alpha^p.$$

When $\alpha =1$, the free Poisson distribution is called in particular
\emph{Mar\v{c}enko-Pastur distribution} (with variance $1$ and parameter $1/\lambda$) \cite{MarcenkoPastur1967}.

\subsection{Meanders} 
Suppose we have an infinite straight river with $2p$ bridges. 
Then, a \emph{meander} (of order $p$) is a collection of closed self-avoiding and noncrossing roads
passing through all of the bridges; in other words,
a meander of order $p$ consists of a number of loops crossing a straight line at $2p$ points, see Figure \ref{figure:meander}. 
Another formulation of this object is via noncrossing pair-partitions. 
If one crosses a bridge then next this person must cross another bridge in order to come back to the original side of the river;
these choices can be represented by an element of $\NC_2(2p)$ for each side of the river. 
Therefore, equivalence classes of meanders are represented by the elements of $\NC_2(2p)\times \NC_2(2p)$, i.e.,~pairs of noncrossing pair-partitions.
Meanders have been investigated in relation to folding polymers.
We refer the reader to \cite{DiFrancescoGolinelliGuitter, DiFrancesco01} for details.

\begin{figure}[tbp] 
 \begin{center}
\ifpdf
\includegraphics[width=0.9\textwidth]{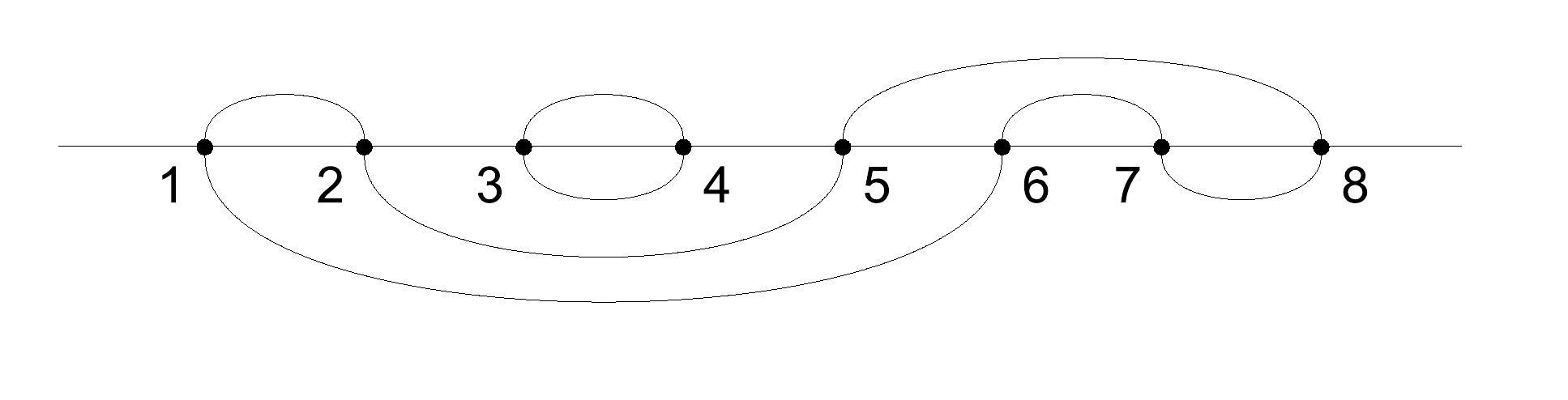}
\else
\includegraphics[width=0.9\textwidth]{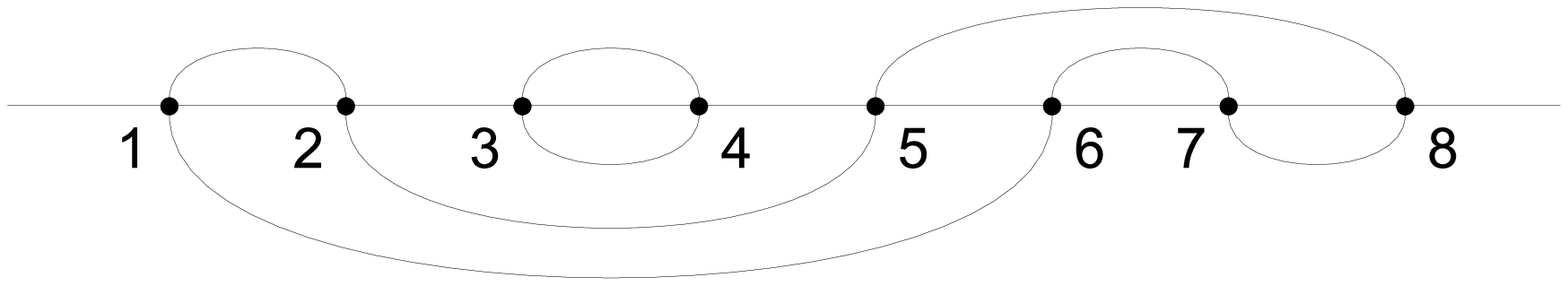}
\fi
\end{center} 
 \caption{Meander of order $p=4$ with $k=2$ connected components corresponding to $\sigma_1 = \big\{ \{1,2\},\{3,4\},\{5,8\},\{6,7\} \big\}$ and 
$\sigma_2 =\big\{ \{1,6\},\{2,5\},\{3,4\},\{7,8\}\big\}$.}  
 \label{figure:meander} 
\end{figure}

For each $k\in [p]$, we define $M_p^{(k)}$ to be the number of nonequivalent meanders with $k$ connected components.  
Then, the \emph{meander polynomial} is defined as
\[
M_p(x) := \sum_{k=1}^{p} x^k M_p^{(k)}.
\] 
Here, each bridge has weight $x$. 
There are some known matrix models for this polynomial, but
we think that ours is the simplest one.
We come back to this problem in Theorem \ref{connection-meanders}.

\subsection{Our model} 
\label{subsec:our-model}

Suppose we have three complex vector spaces $\C^l$, $\C^m$ and $\C^n$ with $l,m,n \in \N$ and 
take the uniformly random unit vector $|v \ket$ in the product space $\C^l \otimes \C^m \otimes\C^n$.
Then, the corresponding random pure state $|v \kb v|$ on $\C^l \otimes \C^m \otimes\C^n$
induces a random mixed state $\rho = VV^*$ on $\C^m \otimes\C^n$ by tracing out the space $\C^l$. 
Here, we use the usual isomorphism $|v \ket \mapsto  V$ between $\C^l \otimes \C^m \otimes\C^n$ and
$\mathcal M_{mn,l} (\C)$.
This model of random mixed quantum states was investigated in \cite{ZyczkowskiSommers2001, BengtssonZyczkowski2006}.  
Then, we study the asymptotic behavior of the eigenvalues of its partial transpose $\rho^{\Gamma}$ with the transpose acting only on the space $\C^n$. 
More precisely, 
let $\{\lambda_i\}_{i=1}^{mn}$ be 
the eigenvalues of the rescaled random matrix $mn\rho^{\Gamma}$;
we define the corresponding empirical eigenvalues distribution
\[
\mu_{mn \rho^\Gamma}:=\frac{1}{mn} \sum_{i=1}^{mn} \delta_{\lambda_i}(x);
\]
our goal is to find the limiting measure in the sense of weak convergence. 
Note that this scaling is used in Section \ref{non-singular} and 
some other scalings are chosen in the following sections.

%

\subsection{Moments of $\rho^\Gamma$}

The following will be our main tool.
\begin{lemma}
\label{lem:key-lemma}
Let $\tau\in S_p$ be a permutation; we denote by $\theta_1,\dots,\theta_\ell$ the lengths of the cycles of $\tau$ (in particular, $\theta_1+\cdots+\theta_\ell=p$). Then
$$ \E  \left( \prod_i \trace \left[\left(mn\rho^{\Gamma}\right)^{\theta_i} \right] \right)= \f(lmn,p) \sum_{\alpha \in S_{p}} l^{-|\alpha|}\ m^{p-|\tau \alpha|}\ n^{p-|\tau^{-1} \alpha|},
$$
where
\begin{equation}
\label{function-f}
\f(D,p) :=   \prod_{i=0}^{p-1} \frac{D}{D+i}
= 1+ O\left( D^{-1} \right).
\end{equation}
\end{lemma}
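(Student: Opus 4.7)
The plan is to reduce the expectation to a Gaussian moment calculation via the standard unnormalization trick. Realize $\rho$ as $W/\tr W$, where $W = \tilde V \tilde V^*$ with $\tilde V$ an $mn \times l$ matrix of i.i.d.\ standard complex Gaussian entries; the crucial fact is that the radius $\tr W = \|\tilde V\|^2$ is independent of the direction $\tilde V/\|\tilde V\|$, hence independent of $\rho$ and (by linearity of partial transpose) of $\rho^\Gamma$. Writing $W^\Gamma = (\tr W)\,\rho^\Gamma$ and using this independence,
\begin{equation*}
\E \prod_i \tr[(W^\Gamma)^{\theta_i}] \;=\; \E (\tr W)^p \cdot \E \prod_i \tr[(\rho^\Gamma)^{\theta_i}],
\end{equation*}
where $p = \theta_1 + \cdots + \theta_\ell$. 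Since $\tr W$ is a sum of $lmn$ unit-rate exponentials, it has a Gamma distribution with shape $lmn$, so $\E (\tr W)^p = (lmn)(lmn+1)\cdots(lmn+p-1)$. Combined with the $(mn)^p$ rescaling built into $mn\rho^\Gamma$, this yields a prefactor $\f(lmn,p)/l^p$.

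The main step is then to evaluate $\E \prod_i \tr[(W^\Gamma)^{\theta_i}]$ by Wick's theorem. Fix a canonical permutation $\tau \in S_p$ whose successive cycles have lengths $\theta_1, \dots, \theta_\ell$, and decompose the row index of $\tilde V$ as $(x,y) \in [m] \times [n]$. Since partial transposition on $\mathbb C^n$ swaps only the $y$-components,
\begin{equation*}
\prod_i \tr[(W^\Gamma)^{\theta_i}] \;=\; \sum_{(x_a,y_a)_{a \in [p]}} \; \prod_{a=1}^p W_{(x_a,\, y_{\tau(a)}),\,(x_{\tau(a)},\, y_a)}.
\end{equation*}
Expanding $W_{(x,y),(x',y')} = \sum_k \tilde V_{(x,y),k}\,\overline{\tilde V_{(x',y'),k}}$ and applying Wick's formula produces a sum over pair-matchings $\sigma \in S_p$ between holomorphic and antiholomorphic factors. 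The $k$-indices contribute $l^{\#\sigma}$; the $x_a$-constraints force $x$ to be constant on orbits of $\tau\sigma$, yielding $m^{\#(\tau\sigma)}$; and the $y_a$-constraints, carrying the twist from the transpose, force $y$ to be constant on orbits of $\sigma\tau^{-1}$, yielding $n^{\#(\tau^{-1}\sigma)}$ (conjugate permutations have identical cycle count).

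Assembling these factors, applying $\#\alpha = p - |\alpha|$, and substituting back gives the claimed identity. The principal technical point is the Wick bookkeeping with partial transpose: one must verify carefully that swapping only the $y$-indices produces the asymmetric pair $(\tau\sigma,\,\tau^{-1}\sigma)$ controlling the $m$- and $n$-orbits, rather than the symmetric pair $(\tau\sigma,\,\tau\sigma)$ appearing for the untransposed Wishart matrix. The remaining steps---the Gamma moment and the algebraic reassembly of $\f(lmn,p)$---are routine.
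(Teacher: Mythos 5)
Your proposal is correct and follows essentially the same route as the paper: realizing $\rho$ as $W/\Tr W$ for a complex Wishart matrix $W=GG^\star$, using the independence of $\Tr W$ (a Gamma/$\chi^2$ variable with $p$-th moment $(lmn)(lmn+1)\cdots(lmn+p-1)$) from $\rho$, and evaluating $\E\prod_i\trace[(W^\Gamma)^{\theta_i}]$ by Wick's formula, with the partial transpose producing the asymmetric pair of constraints that yield $m^{\#(\tau\alpha)}n^{\#(\tau^{-1}\alpha)}l^{\#\alpha}$. The index bookkeeping and the reassembly into $\f(lmn,p)\,l^{-|\alpha|}m^{p-|\tau\alpha|}n^{p-|\tau^{-1}\alpha|}$ both check out.
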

\begin{proof}
Let $G$ be a $(mn,l)$ matrix with independent, complex $N(0,1)$ entries (i.e.~the real and the imaginary part of each entry are independent, real $N\left(0,\frac{1}{2}\right)$ random variables). The normalized matrix $V:=\frac{1}{\sqrt{ \Tr G G^\star}} G$ corresponds (under the standard isomorphism) to the random unit vector $|v\rangle \in\C^l\otimes \C^m \otimes \C^l$, with the uniform distribution on the unit sphere.
We define the Wishart matrix $W:= G G^\star$; thus as $\rho=V V^\star$ we can take $\rho:=\frac{1}{\Tr W} W$. 

Since $\Tr W$ and $\rho=\frac{1}{\Tr W} W$ are independent random variables (see \cite{Nechita2007}), it follows that
\begin{equation}
\label{eq:towards-chi-sqare}
\E  \left( \prod_i \trace \left[\left(W^{\Gamma}\right)^{\theta_i} \right] \right)
= 
\Ex (\Tr W)^p \cdot  
\E  \left( \prod_i \trace \left[\left(\rho^{\Gamma}\right)^{\theta_i} \right] \right). 
\end{equation}

Since the distribution of $2 \Tr W$ is equal to $\chi^2(2lmn)$, it follows that
\begin{equation} 
\label{eq:chi-square}
\Ex (\Tr W)^p = \frac{1}{2^{p}} \underbrace{(2lmn)(2lmn+2)\cdots(2lmn+2p-2)}_{\text{$p$ factors}}.
\end{equation}

We use the following convention for indexing entries of the matrix $G$: $G=\left( G_{\indP\indQ,\indR} \right)$ with $\indP\in[m]$, $\indQ\in [n]$, $\indR\in [l]$. Thus the entries of $W$ and its partial transpose are given by
\begin{align*} W_{\indP\indQ, \indP'\indQ'} = & \sum_\indR G_{\indP\indQ,\indR}\ \overline{G_{\indP'\indQ',\indR}},\\
(W^\Gamma)_{\indP\indQ, \indP'\indQ'} = &\sum_\indR G_{\indP\indQ',\indR}\ \overline{G_{\indP'\indQ,\indR}}. 
\end{align*}

Thus
\begin{multline} 
\label{eq:bigsum}
 \prod_i \trace \left[\left(W^{\Gamma}\right)^{\theta_i} \right]=\\
\sum_{\indP_1,\dots,\indP_p} 
\sum_{\indQ_1,\dots,\indQ_p}  
\sum_{\indR_1,\dots,\indR_p}
G_{\indP_1 \indQ_{\tau(1)}, \indR_1} \cdots G_{\indP_p \indQ_{\tau(p)}, \indR_p} \cdot \overline{G_{\indP_{\tau(1)} \indQ_{1}, \indR_1}} \cdots \overline{G_{\indP_{\tau(p)} \indQ_p, \indR_p}}.
\end{multline}
We apply Wick formula; since the entries of $G$ are complex Gaussian, the summation is only over pairings which match each non-bared factor with some bared one; each such a pairing can be described by a permutation $\alpha\in S_p$. Therefore
\begin{multline}
\label{eq:Wick} 
\E  \left( \prod_i \trace \left[\left(W^{\Gamma}\right)^{\theta_i} \right] \right)=\\
\shoveleft{ \sum_{\alpha \in S_p}
\left(
\sum_{\indP_1,\dots,\indP_p}  
\prod_{s\in[p]} 
[\indP_s = \indP_{\tau(\alpha(s))}] 
\right) \times} \\
\shoveright{ \left(
\sum_{\indQ_1,\dots,\indQ_p}  
\prod_{s\in[p]} 
[\indQ_{\tau(s)}= \indQ_{\alpha(s)}]
\right) 
\left(
\sum_{\indR_1,\dots,\indR_p}  
\prod_{s\in[p]} 
[\indR_s=\indR_{\alpha(s)}] \right)
=} \\
\sum_{\alpha\in S_p} m^{\# \tau\alpha}\ n^{\#\tau^{-1} \alpha}\   l^{\# \alpha}.
\end{multline}

Combining \eqref{eq:towards-chi-sqare}, \eqref{eq:chi-square} and \eqref{eq:Wick} we get the desired result.
%
\end{proof}

For an integer $p\geq 1$ we consider the random variable
\[
Z_n^{(p)} := \frac{1}{mn} \trace \left[\left(mn\rho^{\Gamma}\right)^p\right] =
\int x^p \ d\mu_{mn\rho^\Gamma}
\]
which is just the appropriate moment of the empirical eigenvalues distribution $\mu_{mn\rho^\Gamma}$.
\begin{corollary}
\label{cor:moments}
For an arbitrary integer $p\geq 1$ the expected value of the corresponding moment of $mn \rho^\Gamma$ is given by 
\begin{align} 
\E Z_n^{(p)}
& = \f(lmn,p) \sum_{\alpha \in S_{p}} l^{-|\alpha|}\ m^{p-1-|\pi \alpha|}\ n^{p-1-|\pi^{-1} \alpha|}
\label{moment}\\
& = \f(lmn,p)
\sum_{\alpha \in S_{p}} \left( \frac{mn}{l}\right)^{|\alpha|}\ m^{-2 g_p^{(1)}}\ n^{-2 g_p^{(2)}},
\label{moment3.1} 
\end{align}
where $\pi$, as usual, is given by \eqref{eq:pi}, $g_p^{(i)}:=g_p^{(i)}(\alpha)$ were defined in Section \ref{subsubsec:genus}.
\end{corollary}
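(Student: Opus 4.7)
The proof is essentially a direct corollary of Lemma \ref{lem:key-lemma} combined with careful bookkeeping on lengths of permutations. The plan is as follows.

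First, I would apply Lemma \ref{lem:key-lemma} in the special case $\tau = \pi = (1,2,\ldots,p)$, which is a single cycle of length $p$. Then the product on the left-hand side collapses to $\trace[(mn\rho^\Gamma)^p]$, and using the identity $\#\sigma = p - |\sigma|$ the lemma yields
\[
\E \trace\bigl[(mn\rho^\Gamma)^p\bigr] = \f(lmn,p)\sum_{\alpha\in S_p} l^{-|\alpha|}\,m^{p-|\pi\alpha|}\,n^{p-|\pi^{-1}\alpha|}.
\]
Dividing both sides by $mn$ produces formula \eqref{moment}.

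Next, to pass to the genus-function formulation \eqref{moment3.1}, I would rewrite the exponents of $m$ and $n$ using the definitions from Section \ref{subsubsec:genus}. The two facts I need are that the length function on $S_p$ is invariant under inversion ($|\sigma|=|\sigma^{-1}|$) and under conjugation (because conjugate permutations have the same cycle structure). These give
\[
|\pi\alpha| = |(\pi\alpha)^{-1}| = |\alpha^{-1}\pi^{-1}| = \dist(\alpha,\pi^{-1}),\qquad
|\pi^{-1}\alpha| = |\alpha^{-1}\pi| = \dist(\alpha,\pi).
\]
Since $\pi$ is a $p$-cycle, $\dist(\id,\pi) = \dist(\id,\pi^{-1}) = p-1$, so the defining equations of $g_p^{(1)}$ and $g_p^{(2)}$ rearrange to
\[
p-1 - |\pi\alpha| = |\alpha| - 2g_p^{(1)}(\alpha),\qquad
p-1 - |\pi^{-1}\alpha| = |\alpha| - 2g_p^{(2)}(\alpha).
\]

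Finally, plugging these into \eqref{moment} gives
\[
m^{p-1-|\pi\alpha|}\,n^{p-1-|\pi^{-1}\alpha|} = (mn)^{|\alpha|}\,m^{-2g_p^{(1)}}\,n^{-2g_p^{(2)}},
\]
and combining the factor $(mn)^{|\alpha|}$ with $l^{-|\alpha|}$ yields $(mn/l)^{|\alpha|}$, which is precisely \eqref{moment3.1}. There is no genuine obstacle here; the only thing to be careful about is the direction of the inverses when translating between $|\pi\alpha|$ and $\dist(\alpha,\pi^{-1})$, which is why I would spell out the inverse/conjugation invariance of the length explicitly before substituting into the genus formulas.
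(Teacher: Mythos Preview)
Your proposal is correct and follows exactly the approach indicated in the paper, which merely states that the corollary is a direct consequence of Lemma~\ref{lem:key-lemma} with $\tau=\pi$; you have simply filled in the routine details the paper omits. One minor remark: you mention conjugation invariance of the length function, but in fact only inversion invariance is used in your chain of equalities, so that comment could be dropped.
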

\begin{proof}
It is a direct consequence of Lemma \ref{lem:key-lemma} with $\tau=\pi$ equal to the canonical full cycle.
\end{proof}

Interestingly, the above formula \eqref{moment3.1} gives three regimes with interesting limiting measures. 
The first case, which we investigate in Section \ref{non-singular}, 
is that $l \propto mn$, where three permutations $\alpha,\pi\alpha,\pi^{-1}\alpha$ interact.
The following sections treat 
the other two cases when $l$ or $m$ are fixed
so that just two of the permutations $\alpha,\pi\alpha,\pi^{-1}\alpha$ interact.

\section{The case when dimensions of environment and both system parts are large} 
\label{non-singular}

In this section we investigate the regime 
where $l,m,n \rightarrow \infty$.
Aubrun \cite{Aubrun2012} investigated the case where
$l\propto d^2$ and  $m,n \propto d$ as $d\rightarrow \infty$ via  Wishart random matrices model and 
showed the results which correspond to Theorem \ref{limit-dist} and Theorem \ref{convergence-edge}.
We will show the following results.
Theorem \ref{limit-dist} shows that $mn\rho^\Gamma$ has the limiting distribution 
as long as the dimension $mn$ of the quantum system $\C^m\otimes \C^n$ is proportional to the dimension $l$ of the environment $\C^l$. 
Then, Theorem \ref{convergence-edge} studies the behavior of 
the smallest and largest eigenvalues. 
They turn out to converge to the two corresponding edges of the support of the limiting density
unless $m$ and $n$ grow too differently. 
Finally, large deviation property of the extreme eigenvalues  
is investigated in Theorem  \ref{convergence-speed}.

\subsection{Limiting eigenvalues}\label{convergence-distribution}
We analyze the limit of the empirical eigenvalues distribution of $mn\rho^\Gamma$
when $l \propto mn$ where $l,m,n \rightarrow \infty$. 
In the following, we assume without loss of generality that $m \geq n$.

\begin{thm}\label{limit-dist}
Suppose $\frac{mn}{l} \to a$ with $0\leq a<\infty$ and $m \geq n $. 
Then, as $n \rightarrow \infty$,
the empirical measure of $mn\rho^\Gamma$ converges weakly almost surely to the semicircle distribution $\SC_{1,\sqrt{a}}$. 
Here, we think of $m=m_n$ and $l=l_n$ as sequences which implicitly depend on $n$. 
\end{thm}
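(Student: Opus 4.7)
The plan is to prove convergence of moments and then upgrade to almost sure weak convergence. By \eqref{moment_shift2}, writing $\sigma=\sqrt{a}$ gives
\[
m_p(\SC_{1,\sqrt{a}}) = \sum_{\tau\in\NC_{1,2}(p)} a^{|\tau|},
\]
so it suffices to prove $\E Z_n^{(p)} \to \sum_{\tau\in\NC_{1,2}(p)} a^{|\tau|}$ for every $p\geq 1$, together with a variance estimate strong enough for Borel--Cantelli.

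For the first moment, I would start from the formula \eqref{moment3.1} of Corollary~\ref{cor:moments} and split the sum over $\alpha\in S_p$ according to the vanishing of the two genus functions. The prefactor $\f(lmn,p)$ tends to $1$ by \eqref{function-f}, the ratio $mn/l$ is bounded (it converges to $a$), and $|\alpha|\le p-1$, so $(mn/l)^{|\alpha|}$ stays bounded. Using the hypothesis $m\geq n$,
\[
m^{-2g_p^{(1)}(\alpha)}\,n^{-2g_p^{(2)}(\alpha)}\ \leq\ n^{-2(g_p^{(1)}(\alpha)+g_p^{(2)}(\alpha))},
\]
which vanishes as $n\to\infty$ as soon as at least one of the two genera is positive. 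Hence only the terms with $g_p^{(1)}(\alpha)=g_p^{(2)}(\alpha)=0$ survive in the limit, i.e.\ those $\alpha$ lying on \emph{both} geodesics $\id\to\pi$ and $\id\to\pi^{-1}$. By Lemma~\ref{lem:Biane}\ref{item:BianeC} these are exactly the permutations of the form $\alpha=t(\tau)$ with $\tau\in\NC_{1,2}(p)$, and then $|\alpha|$ coincides with the number of $2$-blocks of $\tau$, yielding the limit $\sum_{\tau\in\NC_{1,2}(p)}a^{|\tau|}$ as required.

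For the almost sure upgrade, I would estimate $\Var Z_n^{(p)}$ by applying Lemma~\ref{lem:key-lemma} to the permutation $\tau=\pi_1\cdot\pi_2\in S_{2p}$, the product of two disjoint $p$-cycles on $\{1,\dots,p\}$ and $\{p+1,\dots,2p\}$. The resulting sum over $\alpha\in S_{2p}$ splits into those $\alpha$ that preserve the bipartition $\{1,\dots,p\}\sqcup\{p+1,\dots,2p\}$ (which exactly reproduce the square of the expectation, up to negligible $\f$-factors) and those that mix the two halves. Adapting the bound of the previous paragraph to $S_{2p}$, every mixing $\alpha$ is strictly further from at least one of $\id$, $\tau$, $\tau^{-1}$ than the disconnected optima, which costs at least two extra factors of $n^{-1}$ (again using $m\geq n$ to convert $m^{-1}$-losses into $n^{-1}$-losses). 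This gives $\Var Z_n^{(p)}=O(n^{-2})$, which is summable, so by Chebyshev and Borel--Cantelli each moment of $\mu_{mn\rho^\Gamma}$ converges almost surely to the corresponding moment of $\SC_{1,\sqrt{a}}$. Since the semicircle has compact support and is therefore determined by its moments, this upgrades to almost sure weak convergence $\mu_{mn\rho^\Gamma}\Rightarrow\SC_{1,\sqrt{a}}$.

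The main obstacle is the variance bound. While the leading-order cancellation between $\E (Z_n^{(p)})^2$ and $(\E Z_n^{(p)})^2$ is morally the standard genus-expansion argument, here one must bookkeep \emph{two} genus functions simultaneously on $S_{2p}$ and check that the "connecting" permutations lose enough factors of $n$ regardless of how $m$ and $n$ are scaled relative to each other (the point is that only $m\geq n\to\infty$ and $mn/l$ bounded are assumed, strictly weaker than Aubrun's $m\propto n\propto\sqrt{l}$). Beyond this, the argument is essentially routine.
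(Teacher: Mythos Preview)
Your proposal is correct and follows essentially the same route as the paper: identify the surviving terms in \eqref{moment3.1} via Lemma~\ref{lem:Biane}\ref{item:BianeC}, then control the variance by applying Lemma~\ref{lem:key-lemma} to $\hat\pi=(1,\dots,p)(p+1,\dots,2p)\in S_{2p}$ and showing the error is $O(n^{-2})$. The only organizational difference is that you split the $S_{2p}$ sum into block-diagonal versus mixing permutations (noting that mixing $\alpha$ lie on neither geodesic $\id\to\hat\pi$ nor $\id\to\hat\pi^{-1}$, hence both genera are $\geq 1$), whereas the paper splits directly by whether both genus functions vanish; these are equivalent bookkeepings of the same estimate.
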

Before presenting the proof we remark that 
in the case $a=0$  the limit distribution $\SC_{1,0}=\delta_1$ becomes a delta measure. 
\begin{proof}[Proof of Theorem \ref{limit-dist}.] 
Non-zero contribution to \eqref{moment3.1} in the limit is given only by the summands for which $\alpha$ is on the following two geodesics:
\[ \id \rightarrow \alpha \rightarrow \pi^{-1};\qquad \id \rightarrow \alpha \rightarrow \pi. \]
We apply Lemma \ref{lem:Biane}\ref{item:BianeC}; it follows that
\[
\alpha \in \NC_{1,2} (p).
\]
Hence, we have
\begin{equation}
\label{eq:moments-converge-sc}
\lim_{n \rightarrow \infty}
\Ex Z_n^{(p)}
= \sum_{\alpha \in \NC_{1,2}(p)} a^{|\alpha|} =
m_p(\SC_{1,\sigma})
\end{equation}
with 
\[ 
\sigma^2 = a,
\] 
where we used \eqref{moment_shift2}.
Thus, we proved the convergence in (expected) moments.

To prove almost sure convergence, we will show later that 
\begin{equation}
 \label{as-conv}
\sum_{n=1}^\infty \operatorname{Var} Z_n^{(p)}  
= \sum_{n=1}^\infty {\Ex \left[ \left(Z_n^{(p)} \right)^2 \right]}   
- {\left(\Ex Z_n^{(p)} \right)^2} 
< \infty
\end{equation}
for each $p \in \N$. This result via standard arguments 
(involving Markov inequality and Borel-Cantelli lemma) would imply that
$Z_n^{(p)}$ converges \emph{almost surely} to the appropriate moment of the semicircle distribution. The latter distribution is uniquely determined by its moments, so 
convergence of measures in the sense of moments implies the convergence in the weak sense; this would finish the proof.

It remains to show that \eqref{as-conv} indeed holds true; we shall do it in the following. 
A more careful analysis of \eqref{eq:moments-converge-sc} shows that every term which converges to zero is in fact at most $O\left( n^{-2} \right)$ thus
\[ \Ex Z_n^{(p)}
=  \sum_{\alpha \in \NC_{1,2}(p)} \left( \frac{mn}{l} \right)^{|\alpha|} + O\left(n^{-2}\right).
\]

We consider the permutation
\[ \hat{\pi}:=(1,2,\dots,p)(p+1,p+2,\dots,2p) \in S_{2p}.\]
We apply Lemma \ref{lem:key-lemma} for $\tau=\hat{\pi}$; it follows that
$\Ex \left( Z_n^{(p)} \right)^2$ is equal to the right-hand side of \eqref{moment3.1} in which the summation over $S_p$ is replaced by summation over $S_{2p}$ and permutation $\pi$ is replaced by $\hat{\pi}$; notice that also the definitions of $g_p^{(i)}$ are affected by this replacement.
In an analogous way it follows that 
\[ \Ex \left( Z_n^{(p)} \right)^2
=  \left( \sum_{\alpha \in \NC_{1,2}(p)} \left( \frac{mn}{l} \right)^{|\alpha|} \right)^2 + O\left(n^{-2}\right)
\]
which finishes the proof of \eqref{as-conv}.
\end{proof}

Theorem \ref{limit-dist} shows that 
the limiting empirical distribution has the compact support on the interval
$[1-2\sqrt{a}, 1+2\sqrt{a}]$. 
However, this does not necessarily mean that 
the minimum and maximum eigenvalues converge to the boundaries of this interval. 
We analyze the convergence of extreme eigenvalues in the next section.

\subsection{Behavior of extreme eigenvalues}\label{convergence-extreme}
In this section we analyze the behavior of minimum and maximum eigenvalues. 
Theorem \ref{convergence-edge} below shows that in the regime of Theorem \ref{limit-dist}
the minimum and the maximum eigenvalues of $mn \rho^\Gamma$ actually converge respectively to
$1-2\sqrt{a}$ and $1+2\sqrt{a}$ under an additional condition on growth of $m$ and $n$.

\subsubsection{Convergence of extreme eigenvalues}
\label{conv-ex}

\begin{thm}\label{convergence-edge}
Let assumptions of Theorem \ref{limit-dist} hold true with an additional condition that
$\log m = o(n^{1/6})$.
Then the extreme eigenvalues of $mn\rho^\Gamma$ converge to $1\pm 2\sqrt{a}$ almost surely.
%
%
\end{thm}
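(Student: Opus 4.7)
The plan is the moment method combined with Markov's inequality and Borel--Cantelli. Theorem~\ref{limit-dist} already gives the almost-sure weak convergence of the empirical distribution to $\SC_{1,\sqrt{a}}$, supported on $[1-2\sqrt{a},1+2\sqrt{a}]$; hence $\limsup_n\lambda_{\max}\geq 1+2\sqrt{a}$ and $\liminf_n\lambda_{\min}\leq 1-2\sqrt{a}$ hold a.s.\ automatically. What remains is the ``no outlier'' direction: for every $\epsilon>0$, almost surely no eigenvalue of $mn\rho^\Gamma$ leaves $[1-2\sqrt{a}-\epsilon,1+2\sqrt{a}+\epsilon]$ eventually.

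To handle both edges at once I would work with the centered matrix $X_n:=mn\rho^\Gamma-I$ and exploit the deterministic bound $\max_i|\lambda_i-1|^p\leq\Tr(X_n^p)$, valid for every even $p$. Binomial-expanding $X_n^p$ and invoking Corollary~\ref{cor:moments} gives
\[
\Ex\bigl[\Tr(X_n^p)\bigr]=mn\sum_{j=0}^{p}\binom{p}{j}(-1)^{p-j}\f(lmn,j)\sum_{\alpha\in S_j}\left(\frac{mn}{l}\right)^{\!|\alpha|}m^{-2g_j^{(1)}(\alpha)}n^{-2g_j^{(2)}(\alpha)}.
\]
Restricting the inner sum to $\alpha\in\NC_{1,2}(j)$ (the geodesic intersection of Lemma~\ref{lem:Biane}\ref{item:BianeC}) yields, in the limit, $m_j(\SC_{1,\sqrt{a}})$, and the outer alternating binomial sum telescopes these main terms to $mn\cdot m_p(\SC_{0,\sqrt{a}})=mn\,a^{p/2}\Cat_{p/2}\leq mn(2\sqrt{a})^p$, since $X-1\sim\SC_{0,\sqrt{a}}$ when $X\sim\SC_{1,\sqrt{a}}$.

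The technical core of the argument is the control of the remaining contributions, coming from non-geodesic permutations $\alpha\notin\NC_{1,2}(j)$ (each of which carries a factor $m^{-2g^{(1)}}n^{-2g^{(2)}}$ with $g^{(1)}+g^{(2)}\geq 1$) and from the correction $1-\f(lmn,j)$. Using $m\geq n$, each unit of total genus costs at least $n^{-2}$, and the classical genus expansion bounds the number of permutations in $S_j$ at fixed total genus by a polynomial in $j$. After exploiting the cancellations in the alternating binomial sum (analogous to how cumulants replace moments in the centered Wick expansion), one should obtain an error bound of the form $mn(2\sqrt{a})^p\cdot\phi(p,n)$ with $\phi(p,n)\to 0$ provided $p$ does not grow too fast with $n$. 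The hypothesis $\log m=o(n^{1/6})$ is tailored so that the Markov-optimal choice $p=p_n=C_\epsilon\lceil\log(mn)\rceil$ (rounded up to an even integer) simultaneously satisfies $\phi(p_n,n)\to 0$ and provides summability below.

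Given such a moment bound, Markov's inequality yields
\[
P\bigl(\max_i|\lambda_i-1|>2\sqrt{a}+\epsilon\bigr)\leq\frac{\Ex[\Tr(X_n^{p_n})]}{(2\sqrt{a}+\epsilon)^{p_n}}\leq mn\left(\frac{2\sqrt{a}+o(1)}{2\sqrt{a}+\epsilon}\right)^{\!p_n},
\]
which for $C_\epsilon$ chosen large enough is summable in $n$; Borel--Cantelli then forces $\max_i|\lambda_i-1|\leq 2\sqrt{a}+\epsilon$ a.s.\ eventually, and letting $\epsilon\downarrow 0$ through a countable sequence proves the claim for both extreme eigenvalues. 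The main obstacle is the genus-by-genus error analysis: one must show that the polynomial-in-$p$ multiplicity of non-geodesic permutations, together with the $\binom{p}{j}$ amplifications from the binomial expansion, is dominated by the $n^{-2g}$ per-genus decay precisely under the mild assumption $\log m=o(n^{1/6})$.
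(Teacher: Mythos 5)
Your overall strategy is exactly the one the paper uses: center the matrix, bound $\Ex\Tr\bigl[(mn\rho^\Gamma-I)^{p}\bigr]$ for an even $p=p_n$ growing with $n$, and finish with Markov's inequality and Borel--Cantelli, with $\log m=o(n^{1/6})$ serving precisely to make the $mn$ prefactor harmless for the chosen growth of $p_n$. (The paper takes $p_n$ as large as the error analysis allows, roughly $n^{1/6}$, rather than your $C_\epsilon\lceil\log(mn)\rceil$, but either choice closes the Borel--Cantelli step.)

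The genuine gap is in the step you describe as ``the outer alternating binomial sum telescopes these main terms,'' which you assert rather than prove. This is where all the work lies, and the naive route fails: bounding each term $\binom{p}{j}\,\Ex\Tr[(mn\rho^\Gamma)^j]$ separately gives roughly $\sum_j\binom{p}{j}(1+2\sqrt{a})^{\,j}$, i.e.\ $(2+2\sqrt{a})^{p}$, which is useless. The paper extracts the cancellation by decomposing each $\alpha\in S_j$ according to its support (set of non-fixed points), observing that $|\alpha|$ and both genus functions depend only on the (order-preservingly relabelled) restriction $\tilde\alpha$ of $\alpha$ to that support. This lets one interchange sums so that the alternating signs act \emph{only} on the scalar quantity $\sum_{k}\binom{p}{k}\binom{k}{t}(-1)^{p-k}\f(lmn,k)$, which a separate finite-difference computation (Lemma~\ref{identity1}) bounds by $\binom{p}{t}\bigl(p/\sqrt{lmn}\bigr)^{p-t}$ --- this is the substitute for your ``moments become cumulants'' heuristic, and nothing in your sketch supplies it. A second missing ingredient is the multiplicity count: your claim that the number of permutations in $S_j$ of fixed total genus is ``polynomial in $j$'' is false (already at genus zero there are Catalan-many, $\sim 4^{j}$); the correct bound, proved in the paper via cut-and-glue surgery and the Harer--Zagier recursion (Lemmas~\ref{cardinarity} and~\ref{HZcor}), is of the form $4^{j/2}\,j^{O(h)}$, whose exponential part is exactly what produces the $(2\sqrt{a})^{j}$ main term and whose polynomial-in-$j$ part is what the constraint $p^{12}\lesssim n^{2}$ (hence $p\lesssim n^{1/6}$) is calibrated against. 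Without these two lemmas the error bound $\phi(p,n)\to 0$ you invoke is not established.
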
 
\begin{proof} 
The difficult part of the theorem is to show that
\[ \limsup_{n \rightarrow \infty} \Vert  mn \rho^\Gamma -I \Vert _\infty \leq 2\sqrt{a}\]
holds almost surely. We will do it in the following.

Let $(p_n)$ be the sequence of even numbers such that
$p_n$ is the largest even number such that $2p_n^{12} \max\left\{1,\frac{mn}{l}\right\} \leq n^2$;
we write $p=p_n$ for simplicity. 
Note that then $\log m = o(p)$. 

Hence we have 
\begin{multline*} 
\Ex \Vert  mn \rho^\Gamma -I \Vert _\infty^p \leq \Ex \Vert  mn \rho^\Gamma -I \Vert _p^p
= \\ \Ex \trace \left[( mn \rho^\Gamma -I )^p \right]
\leq 2 mn p^5  \left(2\sqrt{\alpha}+ o(1) \right)^p.  
\end{multline*}
Here, the last inequality follows from Lemma \ref{high-moment} below. 
Therefore, Markov inequality implies that for any $\epsilon>0$
\[
\sum_n \Pr \left\{\Vert  mn \rho^\Gamma -I \Vert _\infty \geq 2\sqrt{a} + \epsilon \right\}
\leq \sum_n 2 mn p^5 \left( \frac{2\sqrt{a} + o(1)}{2\sqrt{a} + \epsilon } \right)^p 
< \infty.
\]
Thus Borel-Cantelli lemma finishes the proof.
\end{proof}

\begin{lemma}\label{high-moment}
If\/ $m \geq n$ and $2p^{12} \max\{1,a\} \leq n^2$, then 
\[
\Ex \trace \left[ \left(mn \rho^\Gamma -I\right)^p \right] \leq
2 mn p^5  \left(2\sqrt{a} + \frac{\sqrt{a} p}{mn} \right)^p,
\] 
where $a:=\frac{mn}{l}$.
\end{lemma}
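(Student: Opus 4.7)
The plan is to combine the combinatorial moment formula of Corollary~\ref{cor:moments} with a careful estimation of the alternating sum produced by the $-I$ subtraction. Since $I$ is a scalar matrix and commutes with $\rho^\Gamma$, the binomial theorem gives
\[
\E\trace\bigl[(mn\rho^\Gamma - I)^p\bigr] \;=\; \sum_{k=0}^p(-1)^{p-k}\binom{p}{k}\,\E\trace\bigl[(mn\rho^\Gamma)^k\bigr].
\]
Substituting the formula of Corollary~\ref{cor:moments} converts the right-hand side into a double sum: over $k\in\{0,\dots,p\}$ and over $\alpha\in S_k$, of products of the form $a^{|\alpha|}m^{1+|\alpha|-2g_k^{(1)}}n^{1+|\alpha|-2g_k^{(2)}}$, weighted by $\f(lmn,k)(-1)^{p-k}\binom{p}{k}$.

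The combinatorial heart of the argument is to reorganize this double sum as a single sum over $\alpha\in S_p$, exploiting the signed binomial coefficients to eliminate the ``trivial'' contributions from permutations with isolated fixed points. The cancellation is driven by the centering $\E[mn\rho^\Gamma]=I$: for $\alpha=\id\in S_k$, the Wick weight equals $mn=\trace(I)$, which is exactly cancelled by the corresponding $-I$ insertion. Iterating, one expects the surviving contributions to come from fixed-point-free $\alpha$. Among these, the leading terms are the ones lying on the intersection of geodesics $(\id\to\pi)\cap(\id\to\pi^{-1})$ with no singletons, which by Lemma~\ref{lem:Biane}\ref{item:BianeC} is precisely $\NC_2(p)$; such $\alpha$ exist only for $p$ even and are counted by the Catalan number $C_{p/2}\le 2^p$.

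Granting the reorganization, the estimate becomes routine. The leading contribution is bounded by $mn\sum_{\alpha\in\NC_2(p)} a^{|\alpha|}\le mn\,C_{p/2}\,a^{p/2}\le mn\,(2\sqrt a)^p$. Corrections come from two sources. First, permutations of higher genus $(g_p^{(1)},g_p^{(2)})\neq(0,0)$ each pay at least a factor $m^{-2}$ or $n^{-2}$; the number with fixed genus is polynomial in $p$, and the hypothesis $2p^{12}\max\{1,a\}\le n^2$ combined with $m\ge n$ forces their total contribution into a polynomial-in-$p$ overhead. Second, the prefactor $\f(lmn,k)=1+O(p^2/(lmn))$ produces the additive correction $\sqrt a\,p/(mn)$ inside the base of $(\cdot)^p$: indeed $1-\f(lmn,k)\sim p^2/(lmn)=a\,p^2/(mn)^2$, which is precisely the square of that additive term and reappears upon expanding $(2\sqrt a+\sqrt a\,p/(mn))^p$. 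The polynomial factor $p^5$ is amply large to absorb the Catalan asymptotic $\sim 2^p p^{-3/2}$, the genus count, and the $\f$-correction simultaneously.

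The main obstacle is to make the cancellation in Step~2 rigorous: explicitly showing that the signed binomial sum kills contributions from permutations with singletons. A safer alternative that bypasses this identity is to bound each binomial term crudely, using $\binom{p}{k}\le p^{p-k}/(p-k)!$ and the coarse estimate
\[
\E\trace\bigl[(mn\rho^\Gamma)^k\bigr]\;\le\; mn\,\f(lmn,k)\sum_{\alpha\in\NC_{1,2}(k)} a^{|\alpha|}+ \text{(higher genus)} \;\le\; mn\,\f(lmn,k)\,(1+a)^k
\]
after absorbing the higher-genus terms by the same genus-counting argument, and then combining everything into a single geometric bound. This avoids the explicit combinatorial cancellation at the cost of a possibly larger but still polynomial prefactor in $p$, which is harmless since the statement allows $p^5$. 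In either route the key geometric rate $(2\sqrt a)^p$ is driven by the Catalan counting $C_{p/2}\sim 4^{p/2}$ of noncrossing pair partitions.
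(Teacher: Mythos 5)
Your first route is the paper's route, and your structural instincts are correct: expand the binomial, insert Corollary~\ref{cor:moments}, and let the signed sum cancel the fixed points. But the two steps you flag as missing are exactly the two hard steps, and neither is routine. The paper makes the cancellation rigorous by stratifying each $\alpha\in S_k$ according to its set of $t$ non-fixed points, observing that $|\alpha|$ and both genus functions depend only on the (order-preserving relabeling of the) restriction of $\alpha$ to its support, and then interchanging the sums. The signed sum over $k$ then factors out as $\sum_{k=t}^p\binom{p}{k}\binom{k}{t}(-1)^{p-k}\f(lmn,k)$, which would vanish identically for $t<p$ if $\f\equiv 1$; the actual content is Lemma~\ref{identity1}, a finite-difference-calculus estimate bounding this sum by $\binom{p}{t}\left(p/\sqrt{lmn}\right)^{p-t}=\binom{p}{t}\left(\sqrt{a}\,p/(mn)\right)^{p-t}$ --- which is precisely where the additive term $\sqrt{a}\,p/(mn)$ in the statement comes from, not from a heuristic matching of $1-\f$ with the square of that term. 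Likewise, ``the number of permutations with fixed genus is polynomial in $p$'' is not free: one needs the explicit count $|T_{t,h}|\le 4^{t/2-1}t^{12h+5}$ of Lemma~\ref{cardinarity}, whose proof reduces fixed-point-free permutations to fixed-point-free involutions by cut-and-glue surgery and invokes the Harer--Zagier recursion; the hypothesis $2p^{12}\max\{1,a\}\le n^2$ is calibrated exactly so that the resulting geometric series over the genus $h$ sums to at most $2$.

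Your proposed ``safer alternative'' does not work and cannot be repaired by enlarging the polynomial prefactor. Bounding each term of the alternating sum in absolute value by $mn\,\f(lmn,k)(1+a)^k$ and summing gives, from the $k=p$ term alone, a contribution of order $mn(1+a)^p$; since $1+a>2\sqrt{a}$ for every $a\neq 1$ (AM--GM), this exceeds the target $mn(2\sqrt{a})^p$ by an exponentially large factor, catastrophically so in the regime $a<1/4$ where $2\sqrt{a}<1$ --- which is precisely the regime needed for Theorem~\ref{convergence-edge}. The entire point of the lemma is that the centering by $I$ reduces the exponential base from $1+a$ (the edge of the uncentered spectrum) to $2\sqrt{a}$ (the radius of the centered one), so any argument that discards the cancellation is doomed. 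As written, the proposal identifies the correct strategy but proves neither of its two essential ingredients, and its fallback fails.
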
 
Before presenting the formal proof let us make an intuitive remark on this phenomenon. 
Since we now know that $mn\rho^\Gamma$ obeys the shifted semicircle law
in the limit, 
we must more or less have the following:
\[
\frac{1}{mn}\Ex \trace \left[ \left(mn \rho^\Gamma -I\right)^p \right] 
\approx (\sqrt{a} )^p\Cat_{p/2} 
\approx (2\sqrt{a})^p, 
\]
where used the fact that Catalan numbers 
\[ \Cat_{k} = \frac{1}{k+1} \binom{2k}{k} = \int x^{2k}\ d\SC_{0,1}(x) \]
are the moments of the centered semicircular distribution. 
\begin{proof}[Proof of Lemma \ref{high-moment}]
First, we expand 
\[
\frac{1}{mn}\Ex \trace \left[ \left(mn \rho^\Gamma -I\right)^p \right]
= \sum_{k=0}^p \binom{p}{k} (-1)^{p-k} 
\underbrace{\frac{1}{mn}\Ex \trace \left[ \left(mn \rho^\Gamma\right)^k \right] }_{(\spadesuit)}.
\]
Here,  \eqref{moment3.1} implies that
\[
(\spadesuit)
=\f(mnl,k)  \sum_{\alpha \in S_{k}} 
a^{|\alpha|}\ m^{-2g_k^{(1)}} n^{-2g_k^{(2)}}. 
\]

Next, we decompose $S_k$ in the following way:
\[
S_k = \hat S_{k,0} \cup  \cdots \cup \hat S_{k,k},
\]
where
\[
\hat S_{k,t} = \big\{ \alpha \in S_k  \,:\, \# ( \text{non-fixed points of $\alpha$})= t \big\}. 
\]
For example, $S_{k,0} = \{\id\}$ and $S_{k,1} = \emptyset$, etc. 
We also define 
\[
\tilde S_t = \hat S_{t,t} \subset S_t
\]
as the set of permutations without fixpoints.

For a given $\hat\alpha \in \hat S_{k,t}$ we consider its \emph{support}, i.e.~the set of its non-fixed points and the restriction of $\hat\alpha$ to the support. This support has $t$ elements; by a relabeling of these elements, the restriction can be identified with a permutation $\tilde \alpha \in \tilde S_t$. If the relabeling is chosen to be order-preserving, one can easily check (for example, by removing the fixed points one by one) that
\begin{align*}
|\hat \alpha|& =|\tilde\alpha|, \\
\#\left( \hat\alpha \pi_k\right) & = \#\left( \tilde\alpha \pi_t\right),  \\
\#\left( \hat\alpha^{-1} \pi_k\right) & = \#\left( \tilde\alpha^{-1} \pi_t\right),  
\end{align*}
thus the genus functions on $\hat\alpha$ and $\tilde\alpha$ are the same:
\[
g_k^{(i)}(\hat\alpha) = g_t^{(i)}(\tilde\alpha)
\]
for $i=1,2$.  

Hence, 
\begin{multline*}
  \frac{1}{mn}\Ex \left[ \left(mn \rho^\Gamma -I\right)^p \right]
\\
 \shoveleft{ =
\sum_{k=0}^p \binom{p}{k}  (-1)^{p-k} \f(mnl,k)  
\sum_{\alpha \in S_k}a^{|\alpha|} m^{-2g_k^{(1)}(\alpha)} n^{-2g_k^{(2)}(\alpha)}} 
\\
 \shoveleft{ = 
\sum_{k=0}^p \binom{p}{k} (-1)^{p-k} 
 \f(mnl,k) \sum_{t=0}^k \binom{k}{t} 
\sum_{\alpha \in \tilde S_t}a^{|\alpha|}m^{-2g_t^{(1)}(\alpha)}n^{-2g_t^{(2)}(\alpha)} } 
\\
 \shoveleft{ =  
\sum_{t=0}^p \sum_{\alpha \in \tilde S_t}a^{|\alpha|}m^{-2g_t^{(1)}(\alpha)}n^{-2g_t^{(2)}(\alpha)} 
\sum_{k=t}^p   \binom{p}{k}    \binom{k}{t} (-1)^{p-k}  \f(mnl,k) }
\\
 \leq  \sum_{t=0}^p\binom{p}{t} \left( \frac{\sqrt{a}p}{mn} \right)^{p-t} 
\underbrace{ \sum_{\alpha \in \tilde S_t}a^{|\alpha|}m^{-2g_t^{(1)}(\alpha)}n^{-2g_t^{(2)}(\alpha)}}_{(\clubsuit)}.
\end{multline*}
Here, the last inequality comes from Lemma \ref{identity1}.   

Next, define a function $h_t$: 
\begin{equation}
\label{function-ht}
h_t = g_t^{(1)} + g_t^{(2)}.
\end{equation}
For this function we claim that for $\alpha \in \tilde S_t$
\[
\frac{t}{2} \leq |\alpha| \leq \frac{t}{2} + h_t(\alpha). 
\]
Here, the first inequality comes from the definition of $\tilde S_t$ and
the second one is proved similarly as in \eqref{bound-a}.
Then, we have
\begin{eqnarray*}
(\clubsuit)\leq\sum_{\alpha \in \tilde S_t} a^{|\alpha|}n^{-2h_t(\alpha)} 
&\leq& \sum_{h \geq 0} \sum_{\alpha \in T_{t,h}}a^{\frac{t}{2}} \left( \frac{\max\{1,a\}}{n^2}\right)^h \\
&\leq&
\sum_{h \geq 0} 4^{ \frac{t}{2}}\ t^{12h + 5} \ 
a^{\frac{t}{2}} \left( \frac{\max\{1,a\}}{n^2}\right)^h \\
&\leq& p^5 \left( 2\sqrt{a} \right)^t
\sum_{h \geq 0} \left( \frac{p^{12}\max\{1,a\}}{n^2}\right)^h \\
&\leq& 2p^5\left( 2\sqrt{a} \right)^t .
\end{eqnarray*} 
Here, the first inequality comes from $m \geq n$, and 
Lemma \ref{cardinarity} gives the definition of $ T_{t,h}$ and
the third bound.

Thus we showed that 
\[  \frac{1}{mn}\Ex \left[ \left(mn \rho^\Gamma -I\right)^p \right] \leq 
\sum_{t=0}^p\binom{p}{t} \left( \frac{\sqrt{a}p}{mn} \right)^{p-t} 
 2p^5\left( 2\sqrt{a} \right)^t.
\]

Finally, the binomial formula gives our desired result.  
\end{proof}

In the following section,  
we specify how fast those extreme eigenvalues can converge as is in Theorem \ref{convergence-speed}.  
\subsubsection{Speed of convergence}
Below, we show a large deviation bound for the extreme eigenvalues
by using an enhanced version of Levy's lemma introduced in \cite{AubrunSzarekWerner2011}.
We prove the result only for the minimum eigenvalue since
the proof for the maximum eigenvalue is almost the same.

\begin{thm} \label{convergence-speed} 
In the regime of Theorem \ref{convergence-edge}, 
we have the following large deviation bound for the convergence of the
minimum eigenvalue of $mn\rho^\Gamma$, denoted by $\lambda_{\min}$ below, to $1- 2\sqrt{a}$:
there exist three constants $0< \epsilon_1,c_1<1$ and $l_1 \in \N$ such that
\[
\Pr \left\{ \lambda_{\min}  \leq 1- 2\sqrt{a} - \epsilon \right\} \leq \exp \{-c_1 \epsilon^2  l\}
\]
for all $0<\epsilon<\epsilon_1$ and $l \geq l_1$. 
\end{thm}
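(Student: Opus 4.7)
My plan is to combine two ingredients: (i) convergence of the expectation $\E \lambda_{\min}(mn\rho^\Gamma)$ to $1 - 2\sqrt{a}$, and (ii) a sharp Levy-type concentration inequality that controls $\lambda_{\min}$ around its mean at rate $l$. With both in hand the theorem follows by a straightforward triangle-inequality split: choosing $l$ so that $\E\lambda_{\min} \geq 1 - 2\sqrt{a} - \epsilon/2$, one gets
\[
\Pr\{\lambda_{\min} \leq 1 - 2\sqrt{a} - \epsilon\} \leq \Pr\{\lambda_{\min} \leq \E\lambda_{\min} - \epsilon/2\} \leq C\exp\bigl(-c\epsilon^2 l / 4\bigr),
\]
which yields the claim after adjusting the constants $\epsilon_1$, $c_1$, $l_1$.

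For the convergence of the mean, Theorem \ref{convergence-edge} already gives almost sure convergence of $\lambda_{\min}$ to $1 - 2\sqrt{a}$. I would upgrade this to convergence in expectation via uniform integrability; Lemma \ref{high-moment} is tailor-made for this, since it gives uniform $L^p$ bounds on $\|mn\rho^\Gamma - I\|_\infty$, and $|\lambda_{\min} - 1| \leq \|mn\rho^\Gamma - I\|_\infty$ holds pointwise. Hence, for all $l$ larger than some threshold $l_1$, one has $\E\lambda_{\min}(mn\rho^\Gamma) \geq 1 - 2\sqrt{a} - \epsilon/2$.

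For the concentration step, I invoke the refined Levy-type inequality of Aubrun-Szarek-Werner \cite{AubrunSzarekWerner2011}. Representing $|v\rangle = G / \|G\|_{HS}$ for a complex Gaussian $(mn, l)$-matrix $G$, so that $\rho = GG^*/\Tr(GG^*)$, the function $F(|v\rangle) := \lambda_{\min}(mn\rho^\Gamma)$ on the unit sphere of $\C^l \otimes \C^{mn}$ is Lipschitz, and the Aubrun-Szarek-Werner refinement produces a tail bound of the shape
\[
\Pr\bigl\{|F - \E F| > t\bigr\} \leq C\exp\bigl(-c t^2 l\bigr),
\]
in which the rate is governed by the environment dimension $l$ alone.

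The main obstacle is precisely obtaining this rate $l$, rather than the weaker $l/(mn)$ that a direct application of the standard Levy's lemma would yield. A naive estimate gives only that $F$ is Lipschitz of order $mn$ on $S(\C^{lmn})$ (because $\lambda_{\min}$ is $1$-Lipschitz in operator norm and the scaling by $mn$ inflates distances accordingly), so standard Levy delivers merely $\exp(-c\epsilon^2 l/(mn))$, which is useless. The refinement of Aubrun-Szarek-Werner circumvents this by exploiting the bipartite product structure of the sphere, essentially recognizing that the fluctuations of the induced state $\rho$ are driven by the $l$ Gaussian degrees of freedom of the environment rather than by the full sphere dimension $lmn$. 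The technical heart of the proof is thus to verify that their statement applies to the scaled quantity $mn\rho^\Gamma$ and to extract the explicit constants $\epsilon_1, c_1, l_1$; once this is done, the triangle-inequality step above closes the argument.
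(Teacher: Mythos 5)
Your overall architecture---a rate-$l$ concentration inequality for $\lambda_{\min}$ plus convergence of the centering to $1-2\sqrt a$, closed by a triangle inequality---is the same as the paper's (the paper centers at the median of a modified function rather than at the mean, an immaterial difference). The genuine gap is in the concentration step, which you correctly flag as ``the technical heart'' but then leave as a black box, with an explanation of the mechanism that is not the right one. The rate $l$ does not come from ``the bipartite product structure of the sphere'' or from the fluctuations being ``driven by the $l$ Gaussian degrees of freedom of the environment.'' It comes from an improved \emph{local} Lipschitz constant. Concretely, one restricts to $K=\{V : \|V\|_\infty\le\sqrt{C/l}\}$, which by the Haagerup--Thorbj{\o}rnsen bound (Corollary \ref{cor-HT}) has probability at least $1-e^{-cl}$; on $K$,
\[
|\lambda_{\min}(V)-\lambda_{\min}(U)|\le mn\|\rho_V^\Gamma-\rho_U^\Gamma\|_2 = mn\|\rho_V-\rho_U\|_2\le mn\left(\|V\|_\infty+\|U\|_\infty\right)\|V-U\|_2\le 2a\sqrt{Cl}\,\|V-U\|_2,
\]
so the local Lipschitz constant is $O(\sqrt l)$ instead of the global $O(mn)$. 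One then extends $\lambda_{\min}$ from $K$ to the whole sphere with the same Lipschitz constant via an infimal convolution (this extension is the actual content of the Aubrun--Szarek--Werner enhancement \cite{AubrunSzarekWerner2011}), applies the ordinary Levy lemma on the sphere of dimension $\asymp lmn$---giving exponent $\asymp lmn\,\epsilon^2/(\sqrt l)^2=mn\,\epsilon^2\asymp l\epsilon^2$---and finishes with a union bound against $\Pr(K^c)\le e^{-cl}$. None of these three sub-steps (the operator-norm bound on $V$, the Lipschitz extension off $K$, the union bound) appears in your write-up, and without them the claimed tail bound $\exp(-ct^2l)$ is unsupported.

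A secondary, fixable imprecision: Lemma \ref{high-moment} does not give uniform $L^p$ bounds for a fixed $p$, since its right-hand side carries the divergent factor $2mn$. To deduce that the centering converges, one must take $p=p_n$ growing with $n$ as in the proof of Theorem \ref{convergence-edge}, so that $(2mnp^5)^{1/p}\to1$; with that modification your step (i) goes through.
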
 
\begin{proof}
We denote the unit sphere in $\C^d$ by $\mathcal S_d$.
Then define a function $\lambda_{\min}:\mathcal S_{lmn} \rightarrow \R$ by:
\[
\lambda_{\min}(V) = \left(\text{the minimum eigenvalue of $mn(VV^*)^\Gamma$} \right).
\]
Here, again we use the identification $|v\ket = V$ between the two spaces $\C^l \otimes \C^m \otimes \C^n \simeq \mathcal M_{mn,l}(\C)$.
For our purposes, this function must be modified similarly as in \cite{AubrunSzarekWerner2011}.  
Let $K \subseteq \mathcal S_{lmn}$ be the compact subset of the whole domain of $\lambda_{\min}$ 
defined by
\[
K = \left\{V \in \mathcal S_{lmn} \,:\, \Vert V\Vert _\infty 
\leq \sqrt{\frac{C}{ l} } 
\right\}.
\]
Here, the constant $C>0$ is chosen as in Corollary\ref{cor-HT},
which implies that the probability of its complement is exponentially small in $l$:
\begin{equation} 
\label{complement}
\Pr_{V} \left\{V \in K^C\right \} \leq \exp \{-c \ l \}.  
\end{equation}
holds for large enough $l\in\N$. 
Here, the constant $c>0$ is again as in Corollary\ref{cor-HT}.

Next, take $U,V \in K$ and
let $|u\ket$ be a normalized eigenvector for the smallest eigenvalue of $\rho_U^\Gamma$, where
we denote $\rho_U = UU^*$ and $\rho_V = VV^*$. 
Then, assuming without loss of generality that $\lambda_{\min}(V) \geq \lambda_{\min}(U)$, we have
\begin{eqnarray*}
\lambda_{\min}(V) - \lambda_{\min}(U)
&\leq&  \bra u |mn \rho_V^\Gamma |u \ket - \bra u | mn \rho_U^\Gamma |u \ket \\
&\leq& mn \left\Vert  \rho_V^\Gamma -  \rho_U^\Gamma  \right\Vert_\infty \\
&\leq &  mn\left\Vert  \rho_V^\Gamma -  \rho_U^\Gamma \right\Vert _2    \\
&=&  mn\Vert  \rho_V-  \rho_U  \Vert _2 \\
&\leq& mn (\Vert V\Vert _\infty + \Vert  U \Vert _\infty) \Vert  V-U\Vert _2 \\
&\leq&2a  \sqrt{Cl\,}  \Vert  V-U\Vert _2. 
\end{eqnarray*}
Hence, the Lipschitz constant of $\lambda_{\min}$ on $K$ is
upper-bounded by $ 2a \sqrt{C l\,} $ and we set $C_1 =2a\sqrt{C}$. 

Finally we extend this restricted function to the whole domain by:
\[
\tilde \lambda_{\min} (V) = \inf_{W \in K} \left\{\lambda_{\min} (W) +  C_1 \sqrt{l} \Vert V-W\Vert _2 \right\}
\qquad \text{ for $V \in K^C$}.
\]
This is a modified function of $\lambda_{\min}$ and different from the original 
only on the small domain $K^C$. This implies, via Theorem \ref{convergence-edge}, 
that for each $\epsilon >0$ there exists some $l_0\in\N$ such that 
\[
\left(1- 2\sqrt{a}\right)  - {\rm Median} \left[\tilde\lambda_{\min}\right]  < \frac{\epsilon}{2}
\qquad \text{for all $l \geq l_0$}.
\]

On the other hand, by applying Lemma \ref{Levy-lemma} (Levy's lemma) to $\mathcal S_{lmn}$,
there exists $\tilde c_0 >0$ such that 
\begin{equation}
\label{Levy's bound}
\Pr_V \left\{  {\rm Median} \left[\tilde\lambda_{\min}\right] -\tilde\lambda_{\min}(V) \geq  \frac{\epsilon}{2}  \right\} 
 <  \exp \left\{ -  \tilde c_0 \epsilon^2 l \right\} 
\end{equation}

To finish the proof, 
we apply the union bound method to \eqref{complement} and \eqref{Levy's bound}. 
Set $\epsilon_1>0$ so that the latter dominates the former
and choose appropriate constants $c_1>0$ and $l_1\in \N$ to get the desired result. 
\end{proof}

\subsubsection{Implications for quantum information theory and some remarks}
As already shown by Aubrun \cite{Aubrun2012}, $a=\frac{1}{4}$ is the critical value. 
When $\frac{mn}{l}\rightarrow a < \infty$
the smaller edge of the support of the limit density $\SC_{1,1/2}$ is $1-2\sqrt{a}$,
which is strictly negative if and only if $a>\frac{1}{4}$.
Hence, our random matrix $\rho^\Gamma$ is  generically not PPT when $a>\frac{1}{4}$ and
PPT when $a<\frac{1}{4}$.
Also, probability of having non-generic states is exponentially small in $l$ 
by Theorem \ref{convergence-speed}.  

We remark that Theorem 8.2 in \cite{Aubrun2012} gives large deviation results on PPT property, 
via convex geometry arguments,
for the case $m=n$; they consider only whether the minimum eigenvalue is positive or negative. 
However, our large deviation bound of Theorem \ref{convergence-speed} works 
for the largest and smallest eigenvalues: $1\pm 2\sqrt{a}$ with $a$ any positive number, and 
for any ratio between $m$ and $n$ as long as $\log m = o(n^{1/6})$, 
as is stated in Theorem \ref{convergence-edge} ($m$ and $n$ are interchangeable).  
With our method, this condition on the ratio is really needed
so that the probability of ``bad'' event converges to $0$
at the end of proof of Theorem \ref{convergence-edge}.

\section{The case when the environment space is fixed and its connection to Meanders}
\label{fixed-env} 
In this section, we investigate our model when the dimension $l$ of the environment space $\C^l$ is fixed. 
Unlike the previous regime, 
we do not have double-geodesics any more. 
However, interestingly, this regime
gives a simple matrix model for the meander polynomials,
which is the main result in this section.
A special case of the other result of this section (Theorem \ref{theo:trivial-environment}) has been already proved by Aubrun \cite[Proposition 9.1]{Aubrun2012}, and one should be able to prove Theorem \ref{theo:trivial-environment} itself
based on the same method, but we give another proof for completeness.

\subsection{Our model} 

In this section we investigate the case where 
\[
l= l_0, \quad \frac{m}{n}\to c, 
\]
for fixed $l_0\in\N,c>0$, in the limit as $n\to\infty$. As usual, $m=m_n$ depends implicitly on $n$.  
Then, we are interested in the following empirical distribution of $lm\rho^\Gamma$:
\begin{equation}\label{emp_l}
\mu_{lm\rho^\Gamma}:=\frac{1}{mn} \sum_{i=1}^{mn} \lambda_i,
\end{equation} 
where $\lambda_i$ are the eigenvalues of $lm\rho^\Gamma$. 
Here, we use a different scaling from the one in Section \ref{non-singular}
because the rank of $\rho^\Gamma$ is $l\cdot\min\{m,n\}$.
The moments of \eqref{emp_l} can be written as 
\begin{multline*}
\frac{1}{mn} \Ex_{U \in \mathcal U (lmn)}  \tr
\left[ \left(lm \rho^{\Gamma} \right)^p\right] 
= \\ \left(1+O\left(n^{-2}\right)\right) \sum_{\alpha \in S_p}l^{p-|\alpha|} c^{p-1-|\pi\alpha|} n^{p-2-|\pi\alpha|-|\pi^{-1}\alpha|}. 
\end{multline*}
by using \eqref{moment} and
\begin{eqnarray*}
\text{(the power of $n$)} &=& p-2- (|\pi \alpha| +|\pi^{-1} \alpha| )\\
&\leq& p-2 - |\pi^2|  
= 
\begin{cases}
-1 & \text{if $p$ is odd,}\\
0 & \text {if $p$ is even.}
\end{cases} 
\end{eqnarray*}
Here, as before $\pi = (1,2,\ldots,p)$ is the canonical full cycle. 
This means in particular that all the odd moments vanish. 
When $p$ is even, the bound is satiated if and only if $\alpha$ is on the following geodesic:
\[
\pi^{-1} \rightarrow \alpha \rightarrow \pi.
\]
This implies that, for even $p \in \N$,
\begin{equation} \label{moment_l}
\lim_{n \rightarrow \infty}\frac{1}{mn} \Ex_{U \in \mathcal U (lcn^2)}  \left[ \left(lm \rho^{\Gamma} \right)^p\right] 
=
\sum_{\pi^{-1}\rightarrow \alpha \rightarrow \pi } 
l^{\#\alpha} c^{\#(\pi \alpha)-1}. 
\end{equation} 

Unfortunately, the above general formula seems too complicated at the moment. 
So, we investigate two restricted cases where 
$l =1$ or $c=1$. 
The former treats random pure states on the bipartite system, and,
interestingly, the latter shows a connection to the meander polynomials.

\subsection{Random pure states}
We start with the case $l=1$ of trivial environment space $\C^l$; this corresponds to $\rho$ being a random pure state.
\begin{thm}
\label{theo:trivial-environment}
Let $l=1$. Then the empirical eigenvalues distribution \eqref{emp_l} converges  almost surely as $n\rightarrow \infty$ 
to 
\begin{equation}
\label{eq:answer}
 \left(1-\frac{1}{c}\right) \delta_0 + \frac{1}{c} \mu_{B\sqrt{X_1 X_2}}  
\end{equation}
in the weak topology of probability measures. Here, $\mu_{B\sqrt{X_1 X_2}}$ is the distribution of ${B\sqrt{X_1 X_2}}$, where the distribution of random variables $X_1$ and $X_2$ is the free Poisson distribution $\nu_{1,c}$ as in \eqref{Poisson},
and $B$ takes the value $1$ or $-1$ with probability $1/2$,
and they are all (classically) independent.
\end{thm}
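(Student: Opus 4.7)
The approach is to exploit the explicit spectral structure of the partial transpose of a pure state. Identifying $|v\rangle\in\mathbb{C}^m\otimes\mathbb{C}^n$ with a matrix $V\in\mathcal{M}_{m,n}(\mathbb{C})$ via $V_{ij} = \langle ij|v\rangle$, we have $\rho = VV^*/\tr(VV^*)$. Using the Schmidt decomposition
\[
|v\rangle = \sum_{p=1}^{R}\sigma_p\,|a_p\rangle\otimes|b_p\rangle,\qquad R := \min(m,n),\ \sum_p\sigma_p^2=1,
\]
one checks directly that $\rho^\Gamma$ is block-diagonal with respect to the basis $\{|a_pb_q\rangle\}$: for each $p\in[R]$ the vector $|a_pb_p\rangle$ is an eigenvector with eigenvalue $\sigma_p^2$, while for each $p<q$ in $[R]$ the two-dimensional subspace $\operatorname{span}\{|a_pb_q\rangle,|a_qb_p\rangle\}$ is invariant, with $\rho^\Gamma$ acting as $\sigma_p\sigma_q$ times the swap and giving the two eigenvalues $\pm\sigma_p\sigma_q$; the remaining $mn - R^2$ basis vectors lie in the kernel. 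Assuming without loss of generality $m\geq n$ (so $R = n$), the mass at zero is $(mn-n^2)/(mn)\to 1 - 1/c$, which accounts for the $(1-1/c)\delta_0$ atom.

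Next, set $y_p := m\sigma_p^2$. Since $V$ has iid complex Gaussian entries and $\tr(VV^*)/(mn)\to 1$ almost surely, $y_p$ agrees up to a $1+o(1)$ multiplicative factor with the $p$-th eigenvalue of $V^*V/n$. By the strong Marchenko--Pastur theorem, $\tfrac{1}{n}\sum_{p=1}^{n}\delta_{y_p}$ converges almost surely in the weak topology to the free Poisson law $\mu_X$ appearing as the distribution of $X_1,X_2$ in the statement.

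The key algebraic identity, obtained by summing $2k$-th powers of the eigenvalues above, is
\[
\int x^{2k}\,d\mu_{m\rho^\Gamma}(x) = \frac{1}{mn}\left(\sum_{p=1}^{n} y_p^k\right)^{\!2},
\]
with the diagonal contributions $y_p^{2k}$ and the mixed off-diagonal contributions $2(y_py_q)^k$ combining into $\sum_{p,q}y_p^k y_q^k$; the odd moments of $\mu_{m\rho^\Gamma}$ vanish by the $\pm$ symmetry of the off-diagonal eigenvalues, matching $\mathbb{E}B=0$ on the target side. Passing to the limit, the right-hand side equals $\tfrac{n}{m}\bigl(\tfrac{1}{n}\sum_p y_p^k\bigr)^{\!2}\to\tfrac{1}{c}\,m_k(\mu_X)^2=\tfrac{1}{c}\,\mathbb{E}[(B\sqrt{X_1X_2})^{2k}]$, which is exactly the $2k$-th moment of the claimed limit measure.

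The main technical obstacle is upgrading almost-sure moment convergence to almost-sure weak convergence. Since the limit is compactly supported, it is determined by its moments; one then needs $\tfrac{1}{n}\sum_p y_p^k\to m_k(\mu_X)$ almost surely for each fixed $k$. This follows from the strong Marchenko--Pastur theorem combined with almost-sure boundedness of the extreme eigenvalues of $V^*V/n$ (e.g.\ by Bai--Yin), which provides the uniform integrability needed to push weak convergence through monomials. I expect this step to be the most delicate, though it is routine given standard results on Wishart matrices.
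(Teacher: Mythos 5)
Your route is genuinely different from the paper's: instead of expanding the moments via the Wick formula and classifying the permutations on the geodesic $\pi^{-1}\rightarrow\alpha\rightarrow\pi$, you diagonalize $\rho^\Gamma$ exactly through the Schmidt decomposition (eigenvalues $\sigma_p^2$ and $\pm\sigma_p\sigma_q$, plus an $(mn-R^2)$-dimensional kernel) and reduce everything to the Mar\v{c}enko--Pastur theorem for $V^*V/n$. This is sound and arguably more illuminating: it explains the atom at $0$, the sign symmetry, and the product structure $X_1X_2$ directly, whereas the paper obtains the square $\big(\sum_{\tilde\tau\in\NC(q)}c^{\#\tilde\tau}\big)^2$ only after splitting the geodesic condition into two independent noncrossing partitions. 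Your identity $\int x^{2k}\,d\mu=\frac{1}{mn}\big(\sum_p y_p^k\big)^2$ is correct, and the almost-sure upgrade via the strong MP law plus extreme-eigenvalue control is fine. (Minor point: the odd moments do not vanish identically --- the positive diagonal eigenvalues contribute $\frac{1}{mn}\sum_p y_p^{2j+1}$ --- they merely tend to $0$, being $O(1/m)$.)

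The one genuine error is the identification of the limit of $\frac{1}{n}\sum_p\delta_{y_p}$. With $V$ an $m\times n$ standard complex Gaussian and $m/n\to c$, the empirical spectral distribution of $V^*V/n$ converges to the free Poisson law with \emph{all free cumulants equal to $c$}, i.e.\ $\nu_{c,1}$ in the notation of \eqref{Poisson} (second moment $c+c^2$), not to $\nu_{1,c}$ (second moment $2c^2$). The fourth moment distinguishes them: your formula then gives $\frac{1}{c}(c+c^2)^2=c(1+c)^2$, which agrees with a direct Wick computation of $\frac{1}{mn}\E\tr\!\left[(m\rho^\Gamma)^4\right]$ and with the geodesic formula \eqref{moment_l}, whereas $\nu_{1,c}$ would give $4c^3$. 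Be aware that the theorem as printed says $\nu_{1,c}$ while the paper's own proof derives $\nu_{c,1}$ (the rate and jump-size appear to be transposed in the statement), so the apparent agreement of your conclusion with the statement is accidental; your argument, carried out correctly, proves the version with $X_1,X_2\sim\nu_{c,1}$.
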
 
Before showing the proof let us make some remarks. 
First, when $c<1$ the coefficient of $\delta_0$ from the first summand is negative, 
but this negativity is canceled by the mass from the second summand. 
Secondly, the case when $l=c=1$ was considered by Aubrun \cite[Proposition 9.1]{Aubrun2012} which reads that the limiting distribution is the product of two independent random variables, each with the
semicircular distribution $\SC_{0,1}$ (with mean $0$ and variance $1$).
Since if $Y$ is a random variable with a semicircle law $\SC_{0,1}$ then its square $Y^2$ has the Poisson distribution $\nu_{1,1}$, this is a special case of Theorem \ref{theo:trivial-environment}.
However, one should be able to recover Theorem \ref{theo:trivial-environment} via this method as well. 
Also, analysis of Schmidt coefficients of $\psi$, where $\rho=|\psi\rangle\langle\psi|$ 
 explains why the operator $\rho^\Gamma$ has at most $\min(m,n)^2$ non-zero eigenvalues for $l=1$
and why the distribution \eqref{eq:answer} has the atom at zero. 

\begin{proof}[Proof of Theorem \ref{theo:trivial-environment}]
For a technical reason, we consider the following rescaled empirical distribution:
\begin{equation}\label{rescaled-emp}
\frac{c}{mn} \sum_{i=1}^{mn} \lambda_i. 
\end{equation} 
Let $p=2q$; we consider permutation $\alpha\in S_p$ which contributes to \eqref{moment_l} so that
permutation $\tau := \pi \alpha$ belongs to the following geodesic: 
\[
\id \rightarrow \tau \rightarrow \pi^2 = (1,3,\ldots,p-1)(2,4,\ldots,p).
\]
Then the limit moment of \eqref{rescaled-emp} is equal to $c \times  \eqref{moment_l}$, which can be calculated as follows: 
\[
c \times \eqref{moment_l}=\sum_{\id \rightarrow\tau\rightarrow\pi^2} c^{\#\tau}  
=\left( \sum_{\tilde \tau \in \NC(q)} c^{\#\tilde \tau} \right)^2.
\]
Moreover, 
\[ 
 \sum_{\tilde \tau \in \NC(q)} c^{\#\tilde \tau} 
= \sum_{\tilde \tau \in \NC(q)} \prod_{V \in \tau} c 
\]
coincides with the appropriate moment of 
the free Poisson distribution $\nu_{c,1}$ (with rate $c$ and jump size $1$). 
Hence, for the even moments, square root of the product of two classically independent free Poisson random variables 
gives the right moments.
However, since all the odd moments vanish, 
we must add the factor $B$ to recover our desired moments
to get the limit distribution of \eqref{rescaled-emp}. 
After rescaling back this distribution,
the additional atom $(1-\frac{1}{c}) \delta_0$ does not change the moments $m_p$ of the measure   for $p=1,2,\dots$ but takes care of the correct value of the moment $m_0$ (the total mass of the measure). This shows convergence in expected moments.

Almost sure convergence can be proven similarly as in the proof of Theorem \ref{limit-dist}. 
Since the limit is compactly supported, the converges in moments implies convergence in the weak sense.
\end{proof}

\subsection{Meander polynomials with our model}
Next, we consider the case where $c=1$,
where our model gives the meander polynomials. 

\begin{thm}\label{connection-meanders}
If $\frac{m}{n}\to c=1$ and $q \in \N$ then the $2q$-th moment of $\mu_{lm\rho^\Gamma}$  converges as $n\rightarrow \infty$ 
to the meander polynomial $M_q (l)$. 
\end{thm}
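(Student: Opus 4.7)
The plan is to combine the moment formula \eqref{moment_l} with an explicit bijection between the relevant geodesic permutations and meanders. Specializing \eqref{moment_l} to $c=1$ and $p=2q$ reduces the claim to the purely combinatorial identity
\[
\sum_{\pi^{-1}\rightarrow \alpha \rightarrow \pi} l^{\#\alpha} = M_q(l);
\]
note that the odd moments vanish automatically by the parity observation preceding \eqref{moment_l}.

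To establish this identity, I would define $\Phi:\NC_2(2q)\times\NC_2(2q)\to S_{2q}$ by
\[
\Phi(\sigma_1,\sigma_2)(i) := \begin{cases} \sigma_1(i) & \text{if $i$ is odd,} \\ \sigma_2(i) & \text{if $i$ is even.} \end{cases}
\]
A standard balanced-parentheses argument shows that in any noncrossing pair-partition of $[2q]$ each block $\{i,j\}$ with $i<j$ satisfies $j-i$ odd, so the two endpoints have opposite parities; consequently $\Phi(\sigma_1,\sigma_2)$ is a well-defined permutation swapping parities, and the map $\Phi$ is manifestly injective (each $\sigma_a$ is determined by its restriction to odds via the involution property). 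The key claim is that $\Phi$ actually lands in the geodesic $\pi^{-1}\rightarrow\pi$. Since left-multiplication by $\pi$ is an isometry of the Cayley graph, this amounts to showing $\pi\alpha$ lies on the geodesic $\id\rightarrow\pi^2$ for $\alpha=\Phi(\sigma_1,\sigma_2)$. Because $\pi$ also swaps parities, $\pi\alpha$ preserves them; by the standard additivity of the length function under disjoint supports it factors as $\pi\alpha=\beta_{\mathrm{odd}}\,\beta_{\mathrm{even}}$, and applying Lemma \ref{lem:Biane}\ref{item:BianeA} separately to the two cycles of $\pi^2$ reduces the geodesic condition to each factor defining a noncrossing partition of the corresponding odd or even subset. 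A direct translation shows these noncrossing partitions on the odd and even subsets are precisely the data of $\sigma_1$ and $\sigma_2$. Injectivity together with the cardinality coincidence $|\NC_2(2q)|^2 = |\{\alpha\in S_{2q}:\pi^{-1}\rightarrow\alpha\rightarrow\pi\}|$ (both being Catalan-squared via the two cycles of $\pi^2$) then promotes $\Phi$ to a bijection.

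I would then show $\#\Phi(\sigma_1,\sigma_2) = k(\sigma_1,\sigma_2)$, the number of connected components of the associated meander. The traversal interpretation makes this transparent: starting at any index $i$ and following the $\sigma_1$-arc above the river to $\sigma_1(i)$, then the $\sigma_2$-arc below back to $\sigma_2(\sigma_1(i))$, and so on, precisely traces out the meander loop through $i$. By construction of $\Phi$, this traversal is the action of $\alpha$ itself -- from an odd index use $\sigma_1$, from an even index use $\sigma_2$ -- so each meander loop is realized as exactly one cycle of $\alpha$. Combining,
\[
\sum_{\pi^{-1}\rightarrow\alpha\rightarrow\pi} l^{\#\alpha} = \sum_{(\sigma_1,\sigma_2)\in\NC_2(2q)^2} l^{k(\sigma_1,\sigma_2)} = \sum_{k=1}^{q} M_q^{(k)}\, l^k = M_q(l),
\]
as desired.

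The main obstacle is verifying that $\Phi$ indeed lands in the geodesic, i.e.\ translating the ``noncrossing on $[2q]$'' hypothesis on $\sigma_1,\sigma_2$ into Biane's noncrossing condition on each of the two $q$-cycles of $\pi^2$ separately. This is the only nontrivial combinatorial step; once it is established, the cycle-count matching under the traversal and the final identification with $M_q(l)$ follow cleanly.
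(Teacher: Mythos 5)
Your argument is correct and follows essentially the same route as the paper: both reduce, via Biane's lemma applied to the two cycles of $\pi^2$, to a parametrization of the geodesic $\pi^{-1}\rightarrow\pi$ by pairs of noncrossing (pair-)partitions --- your map $\Phi$ is exactly the paper's correspondence $\alpha=\pi^{-1}(\tau_1\oplus\tau_2)$ composed with the fattening bijection $\NC(q)\leftrightarrow\NC_2(2q)$ --- and both identify the cycles of $\alpha$ with the meander loops by the same alternating $\sigma_1/\sigma_2$ traversal. The only real difference is that you construct the bijection from meanders to geodesic permutations rather than the reverse; the combinatorial content is identical.
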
 
\begin{proof} 
First, since $c=1$
\begin{equation}\label{moment-meander} 
\eqref{moment_l} = \sum_{\tau_1,\tau_2} l^{\# [\pi^{-1} (\tau_1 \oplus \tau_2) ]}.
\end{equation}
Here, $\tau_1 \in \NC(\{1,3,\ldots,p-1\})$ and $\tau_2 \in \NC(\{2,4,\ldots,p\})$.

Next, we recall the well-known bijection between $\NC(q)$ and $\NC_2(2q)$. 
We represent a noncrossing partition as in Figure \ref{figure:NC}. 
We add two points $i_{-}$ and $i_{+}$ for both sides of each $i \in [q]$,
left and right respectively. 
Then we connect $i_{+}$ and $j_{-}$ if $\alpha(i) =j$. 
The example is drawn in Figure \ref{figure:NC2}, where we also use arrows to show the action of the permutation $\alpha$. This procedure is commonly known as \emph{fattening}.

\begin{figure}[tbp]  
  \centering
  \subfloat[]{\label{figure:NC}
\ifpdf
\includegraphics[width=0.45\textwidth]{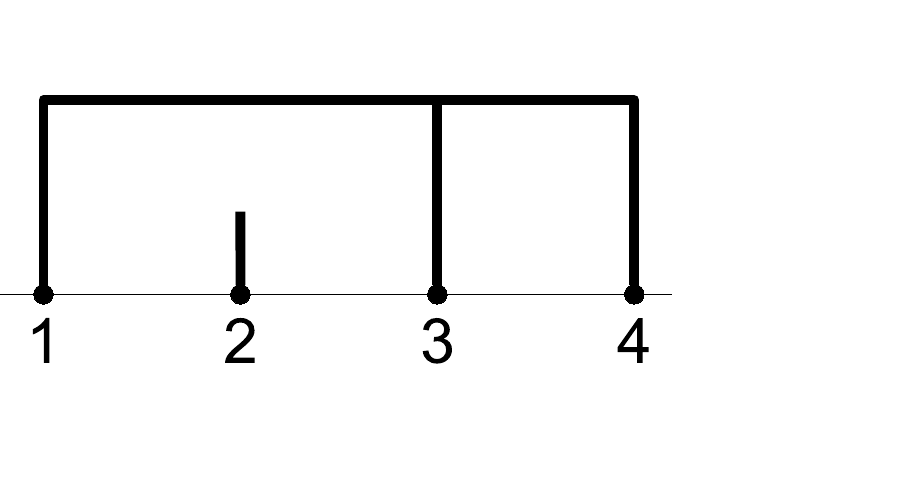}
\else
\includegraphics[width=0.45\textwidth]{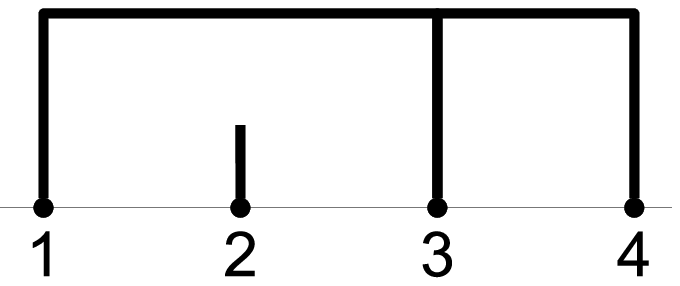}
\fi}
\hfill     
  \subfloat[]{\label{figure:NC2}
\ifpdf
\includegraphics[width=0.45\textwidth]{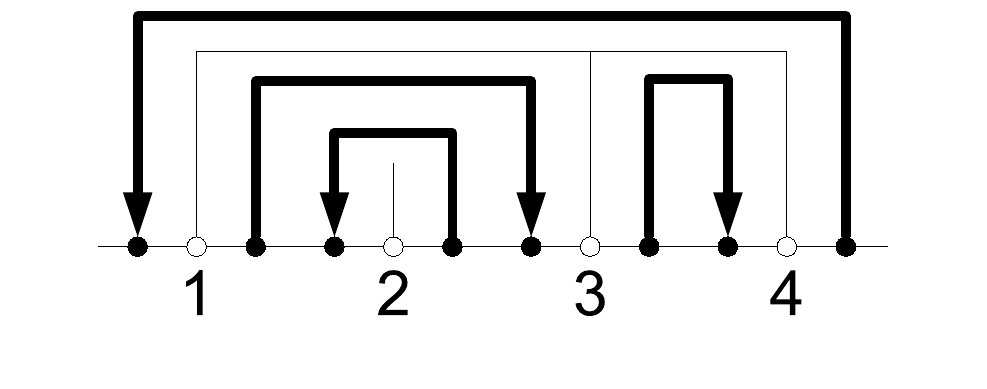}
\else
\includegraphics[width=0.45\textwidth]{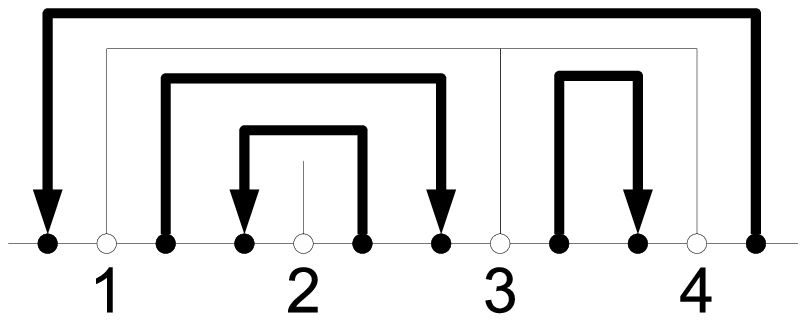}
\fi}
  \caption{\protect\subref{figure:NC} graphical representation of a noncrossing partition $ \big\{\{1,3,4\}\{2\}\big\}\in\NC(4)$, \protect\subref{figure:NC2} the corresponding noncrossing pair-partition from $\NC_2(8)$.}  
  \label{}    
\end{figure}

\begin{figure}[tbp] 
 \begin{center}
\ifpdf
\includegraphics[width=0.9\textwidth]{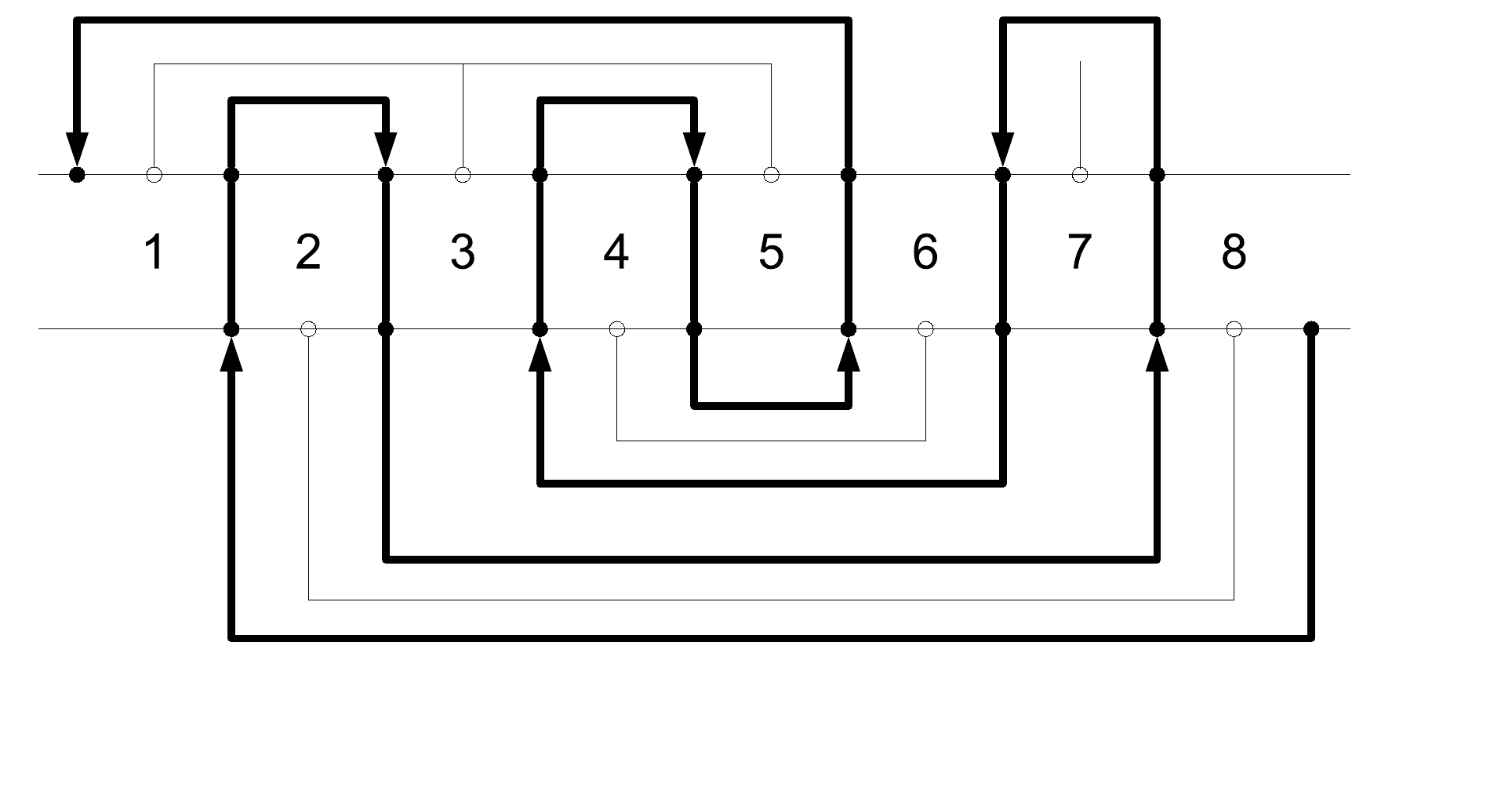}
\else
\includegraphics[width=0.9\textwidth]{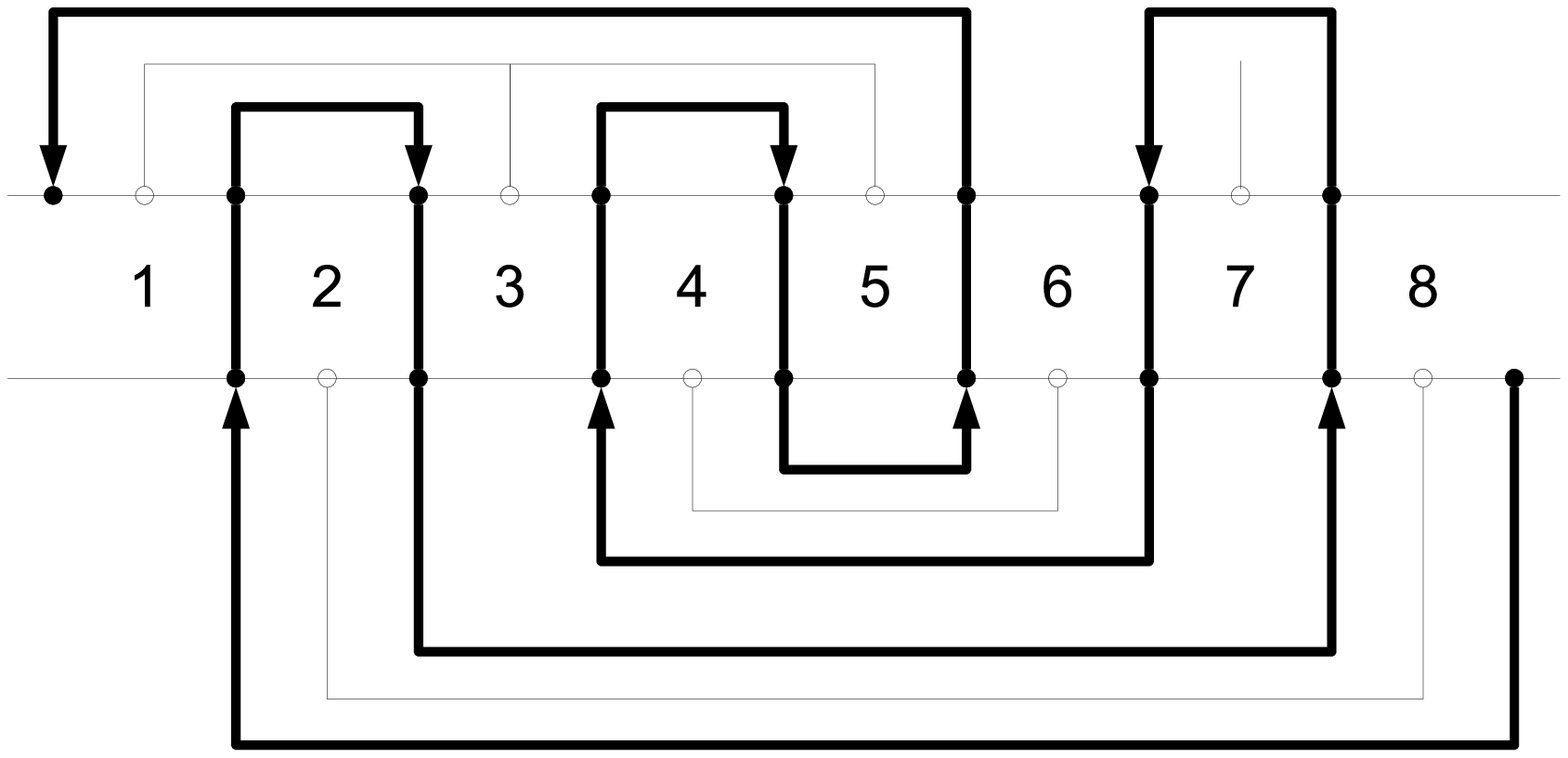}
\fi
\end{center} 
 \caption{Graphical representation of noncrossing partitions $\tau_{1} = (1,3,5)(7)$ and $\tau_2 =(2,8)(4,6)$}   
 \label{figure:meander-NC2} 
\end{figure}

Finally, we will calculate  $\# [\pi^{-1} (\tau_1 \oplus \tau_2)] $. 
To understand the loop-structure of $\pi^{-1} (\tau_1 \oplus \tau_2)$ we note that
$\tau_1$ and $\tau_2$ act in turn; 
$\tau_1$ acts on odd numbers and $\tau_2$ even numbers, but $\pi^{-1}$ switches parities.
For this reason, we suggest the following graphical representation, see Figure \ref{figure:meander-NC2}. 
\begin{enumerate}
\item We draw two parallel horizontal straight lines with odd-numbered points on the upper line and 
 even-numbered points  on the lower line.  
\item Draw the graphical representation for $\tau_1$ above the upper line and
the graphical representation for $\tau_2$ below the lower line. 
\item Identify respectively $(2i+1)_{-}$ and $(2i)_{+}$, and then  $(2i)_{-}$ and $(2i-1)_{+}$.  
\end{enumerate} 
Here, note that the last step corresponds to the action of $\pi^{-1}$. 
An example with  is drawn in Figure \ref{figure:meander-NC2}.
Since we can think of $\tau_1$ and $\tau_2$ as elements of $\NC_2(2q)$, 
identifying $1_{-1}$ and $2q_{+}$ reduces \eqref{moment-meander} into the meander polynomials $M_q(l)$. 
\end{proof}  

To compare our result with an existing matrix model for the meander polynomial,
we quote a result of Di Francesco
\cite[Eqs.~(6)--(9)]{DiFrancesco01}:
\[
M_q(l) = \lim_{N \rightarrow \infty} \frac{1}{N^2} 
\sum_{a_1,a_2,\ldots,a_{2q}=1}^{l} 
\E \left[\left |\trace \left[B_{a_1}B_{a_2} \cdots B_{a_{2q}}\right] \right|^2  \right].
\]
Here,  $\{B_1, \ldots,B_l\}$ are i.i.d. ~random $N \times N$ Hermitian matrices such that
\[
\E \left[(B)_{xy}(B)_{zw} \right] = \frac{1}{N} \delta_{xw}\delta_{yz}. 
\]
On the other hand, our model of partial transposed random quantum states can be modified 
for the meander polynomial in the following way:
\begin{corollary}
Take random complex Gaussian matrices $G$ of $n^2 \times l$, 
where $G_{xy} = a_{xy}+ b_{xy} \mathrm{i}$ and
$a_{xy}, b_{xy}$ are i.i.d. ~real Gaussian distribution with mean $0$ and variance $\frac{1}{2n}$. 
Then, 
\[
 M_q(l) =    \lim_{n \rightarrow \infty} \frac{1}{n^2} \trace \E \left[\left(\left(GG^*\right)^\Gamma\right)^{2q}\right].
\]
\end{corollary}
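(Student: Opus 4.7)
The plan is to relate the left-hand side $\frac{1}{n^2}\E \Tr[(W^\Gamma)^{2q}]$, with $W := GG^*$, to the $(2q)$-th moment of $\mu_{lm\rho^\Gamma}$ in the regime $m = n$, and then invoke Theorem \ref{connection-meanders} as a black box. First, I would remove the nonstandard variance by rescaling $\tilde G := \sqrt{n}\,G$, which has entries drawn from the standard complex Gaussian of Lemma \ref{lem:key-lemma}. Writing $\tilde W := \tilde G \tilde G^*$ for the corresponding standard Wishart matrix, one has $W = \tilde W/n$, while the (scale-invariant) random state $\rho = \tilde W/\Tr \tilde W$ is precisely the model of Section \ref{subsec:our-model} with $m = n$. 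Since partial transpose is linear and $\Tr \tilde W$ is a scalar, the identity $\tilde W^\Gamma = (\Tr \tilde W)\,\rho^\Gamma$ gives
\[
(W^\Gamma)^{2q} = \frac{(\Tr \tilde W)^{2q}}{n^{2q}}\,(\rho^\Gamma)^{2q};
\]
taking expectation and using the independence of $\Tr \tilde W$ and $\rho$ (see the discussion around \eqref{eq:towards-chi-sqare}) yields
\[
\frac{1}{n^2}\,\E \Tr\bigl[(W^\Gamma)^{2q}\bigr] = \frac{\E\bigl[(\Tr \tilde W)^{2q}\bigr]}{l^{2q}\, n^{4q}} \cdot \frac{1}{n^2}\,\E \Tr\bigl[(ln\, \rho^\Gamma)^{2q}\bigr].
\]

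Next, I would analyse the two factors separately in the limit. By the $\chi^2$-moment identity \eqref{eq:chi-square} applied with $mn = n^2$,
\[
\E\bigl[(\Tr \tilde W)^{2q}\bigr] = \prod_{i=0}^{2q-1}\bigl(ln^2 + i\bigr) = l^{2q}\, n^{4q}\bigl(1 + O(n^{-2})\bigr),
\]
so the first factor tends to $1$. The second factor is exactly the $(2q)$-th moment of $\mu_{lm\rho^\Gamma}$ in the regime where $l$ is fixed and $m = n$ (so that $m/n \to c = 1$), and Theorem \ref{connection-meanders} asserts that it converges to $M_q(l)$. Multiplying these two limits produces the desired value $M_q(l)$.

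There is no substantive obstacle here: the argument is essentially careful book-keeping of normalisations. The only delicate points are the independence of $\Tr \tilde W$ and $\rho$, which lets the expectation factorise cleanly, and the identification of the $n^2 \times l$ matrix in the corollary with the $mn \times l$ matrix of the paper's general setup by taking $m = n$, so that Theorem \ref{connection-meanders} applies verbatim. An entirely parallel route would be to bypass Theorem \ref{connection-meanders} and apply the Wick expansion of Lemma \ref{lem:key-lemma} directly to $\E \Tr(W^\Gamma)^{2q}$, in which case the bound $|\pi\alpha|+|\pi^{-1}\alpha| \geq |\pi^2| = 2q-2$ (with equality on the geodesic $\pi^{-1}\to\alpha\to\pi$) would single out exactly the same combinatorial sum.
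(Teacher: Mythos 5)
Your argument is correct and is precisely the normalisation book-keeping the paper leaves implicit: the corollary is stated without proof as an immediate consequence of Theorem \ref{connection-meanders}, obtained by trading the state normalisation $\rho = \tilde W/\Tr\tilde W$ for the Wishart matrix via the independence of $\Tr\tilde W$ and $\rho$ and the $\chi^2$-moment asymptotics $\E[(\Tr\tilde W)^{2q}] = l^{2q}n^{4q}(1+O(n^{-2}))$, with $m=n$ so that $c=1$. Your alternative route through the Wick expansion of Lemma \ref{lem:key-lemma} is likewise valid and matches the computation inside the proof of Theorem \ref{connection-meanders}.
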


\section{The case when one of the system parts is fixed} \label{fixed-transpose}
In this section, we review the case where
one of the spaces $\C^m$ or $\C^n$ is fixed. 
Without loss of generality we may assume that $\C^m$ is fixed. 
This case was investigated by Banica and Nechita in \cite{BanicaNechita2012,BanicaNechita2012a}  
via approximation by the complex Wishart matrix and
they proved that the limiting measure is the free difference of two free Poisson distributions. We will present a new proof of this result. 

Suppose 
\[
\frac{l}{n}\to b, \quad m=m_0, 
\] 
where $b>0$ and $m_0 \in \N$ are fixed constants. 
Then, the restatement of \cite{BanicaNechita2012,BanicaNechita2012a}  in our setting is:
\begin{thm}[Banica, Nechita]
The empirical distribution of $m l\rho^\Gamma$ converges weakly, as $n\rightarrow\infty$, almost surely to 
the probability measure of the free difference of free Poisson distributions
with parameters $b(m_0 \pm 1)/2$.
\end{thm}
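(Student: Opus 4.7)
The plan is to mimic the derivations in Sections~\ref{non-singular} and~\ref{fixed-env}: start from the exact moment formula in Corollary~\ref{cor:moments}, isolate the permutations $\alpha \in S_p$ that contribute non-trivially in the limit, and match the resulting combinatorial sum against the free-cumulant expansion of the target measure. Since $ml\rho^\Gamma = (l/n)\cdot mn\rho^\Gamma$, the $p$-th empirical moment rewrites as
\[
\Ex\!\left[\tfrac{1}{mn}\trace\bigl((ml\rho^\Gamma)^p\bigr)\right]
= \f(lmn,p) \sum_{\alpha \in S_p} l^{p-|\alpha|}\, m^{p-1-|\pi\alpha|}\, n^{-1-|\pi^{-1}\alpha|}.
\]
With $l/n \to b$ and $m = m_0$ fixed, the net exponent of $n$ in the $\alpha$-summand is $p - |\alpha| - 1 - |\pi^{-1}\alpha|$. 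Because $|\pi^{-1}\alpha| = \dist(\alpha,\pi)$, the triangle inequality $|\alpha| + |\pi^{-1}\alpha| \ge \dist(\id,\pi) = p - 1$ forces this exponent to be $\le 0$, with equality iff $\alpha$ lies on the geodesic $\id \to \pi$. By Lemma~\ref{lem:Biane}\ref{item:BianeA} such $\alpha$ are exactly the $\alpha = t(\tau)$ with $\tau \in \NC(p)$, and $|\alpha| = p - \#\tau$, so the limit moment reduces to $\sum_{\tau \in \NC(p)} b^{\#\tau}\, m_0^{\,p-1-|\pi t(\tau)|}$.

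The central combinatorial fact I need to establish is the identity
\[
\#\bigl(\pi\, t(\tau)\bigr) = 1 + \#\{V \in \tau : |V| \text{ even}\}
\qquad \text{for all } \tau \in \NC(p),
\]
equivalently $p - 1 - |\pi t(\tau)| = \#\{\text{even-sized blocks of }\tau\}$. This is the step I expect to be the main obstacle: the identity is easily verified on small examples but does not admit a one-line proof. I would attack it by induction on $p$ using the recursive structure of $\NC(p)$ --- remove the block containing $1$ together with the sub-partitions it confines --- while carefully tracking how cycles of $\pi t(\tau)$ fuse or split. A potentially cleaner route exploits the Kreweras identity $t(\tau)\, t(K\tau) = \pi$, which rewrites $\pi t(\tau) = t(\tau)^2 \cdot \bigl[ t(\tau)^{-1} t(K\tau) t(\tau) \bigr]$; the first factor $t(\tau)^2$ has exactly $\#\tau + \#\{\text{even blocks of }\tau\}$ cycles (odd-length cycles of $t(\tau)$ square to a single cycle while even-length ones split into two), and one only needs to analyze the cycle interaction with the conjugated second factor.

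Granting this identity, the limit reads $\sum_{\tau \in \NC(p)} \prod_{V \in \tau} \kappa_{|V|}$ with $\kappa_q = b$ for odd $q$ and $\kappa_q = b\, m_0$ for even $q$; by the moment-cumulant formula~\eqref{eq:moment-cumulant} this is the $p$-th moment of the unique measure whose free cumulants are $(\kappa_q)$. To identify this measure as the stated free Poisson difference, I would use additivity of free cumulants under free convolution together with the rule $\kappa_q(-X) = (-1)^q \kappa_q(X)$: the $q$-th free cumulant of $\nu_{b(m_0+1)/2,\,1} \boxminus \nu_{b(m_0-1)/2,\,1}$ is
\[
\tfrac{b(m_0+1)}{2} + (-1)^q \tfrac{b(m_0-1)}{2},
\]
which evaluates to $b\, m_0$ for even $q$ and $b$ for odd $q$, matching $\kappa_q$. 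Finally, almost sure convergence follows by a variance bound $\operatorname{Var} Z_n^{(p)} = O(n^{-2})$ obtained exactly as in the proof of Theorem~\ref{limit-dist} --- apply Lemma~\ref{lem:key-lemma} with $\tau = \hat\pi \in S_{2p}$ --- combined with Borel-Cantelli; the compact support of the limit then upgrades convergence in moments to weak convergence.
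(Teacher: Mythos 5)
Your proposal follows essentially the same route as the paper: bound the exponent of $n$ to isolate the geodesic $\id\to\pi$, convert the surviving sum into one over $\NC(p)$ via Biane's bijection, use the identity $\#\bigl(\pi\, t(\tau)\bigr)=1+e(\tau)$ (with $e$ the number of even blocks) to read off free cumulants equal to $b$ for odd order and $b\,m_0$ for even order, match these with the free difference of the two free Poisson laws, and finish with the variance/Borel--Cantelli argument for almost sure convergence. The only divergence is at the step you correctly flag as the main obstacle: the paper does not prove the even-block identity but simply cites Lemma 2.1 of Banica--Nechita, so your Kreweras-complement sketch is extra work that would need to be completed (or replaced by the citation) but does not change the architecture of the argument.
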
 

\begin{proof}
First, formula \eqref{moment} gives 
\begin{multline*}
\frac{1}{mn}\Ex_{U \in \mathcal{U}(lmn)} \trace [(ml\rho ^{\Gamma})^p]
= \\
\left(1+O\left(n^{-2}\right)\right)\sum_{\alpha \in S_{p}} b^p (bn)^{-|\alpha|} m^{p-1-|\pi \alpha|} n^{p-1-|\pi^{-1} \alpha|}. 
\end{multline*}
Then, the power of $n$ is bounded as:
\[
p-1-(|\alpha|+|\alpha^{-1}\pi|) \leq p-1- |\pi| =0.
\]
This implies the following geodesic formula for the limit:
\begin{eqnarray}
\lim_{n \rightarrow \infty}\frac{1}{mn} \Ex_{U \in \mathcal{U}(lmn)} \trace [(ml\rho ^{\Gamma})^p]
&=& \sum_{\id \rightarrow \alpha \rightarrow \pi} b^{\# \alpha} m^{\#(\pi\alpha)-1} \notag\\
&=& \sum_{\alpha\in\NC(p)} b^{\# \alpha} m^{e(\alpha)}. \label{moment-BN}
\end{eqnarray} 
Here, for the second equality we used $1+e(\alpha)= \#(\pi\alpha)$, 
which was proven in \cite[Lemma 2.1]{BanicaNechita2012}, where
$e(\cdot)$ is the number of blocks whose size is even. The latter formula gives the free cumulants of the limiting measure as 
\begin{equation}
\label{eq:reference}
k_r= \begin{cases} 
bm &\text{if $r$ is even,} \\ 
b &\text{if $r$ is odd.}  
\end{cases} 
\end{equation}

If we set
\[
x = \frac{b(m+1)}{2}; \qquad y = \frac{b(m-1)}{2}
\]
and define $X$ (resp.~$Y$) to be a random variable with free Poisson distribution
$\nu_{x,1}$ (resp.~$\nu_{y,1}$) then, if $X$ and $Y$ are free then
the cumulants of $X-Y$ are given by
\[
k^{(X-Y)}_r = x + (-1)^r y 
= 
\begin{cases} 
bm &\text{if $r$ is even,} \\ 
b &\text{if $r$ is odd.}  
\end{cases} 
\]
so they coincide with the cumulants \eqref{eq:reference} of the limiting distribution. In this way we showed convergence in expected moments.

Almost sure convergence can be proven similarly as in the proof of Theorem \ref{limit-dist}. As the limiting distribution is compactly supported, the convergence in moments implies weak converge.
\end{proof}

\section{Concluding remark} \label{conclusion} 
In this paper, we investigated asymptotic behavior of eigenvalues of
partial transpose of random quantum states on  bipartite systems. 
We naturally picked up three regimes depending on how the concerned spaces grow and
showed their connections to the semicircular distribution, the free Poisson distribution and the meander polynomials. 
Other regimes may show other interesting behaviors.

\appendix

\section{Lemmas for Section \ref{conv-ex}}

\begin{lemma}\label{identity1}
For arbitrary integers $p,t\geq 0$ and $D\geq 2$
the following bound holds true:
\begin{equation}
\label{eq:estimate}
\sum_{k\geq 0}   \binom{p}{k}    \binom{k}{t} (-1)^{p-k}\  \f(D,k) 
\leq\binom{p}{t} \left( \frac{p}{\sqrt{D}} \right)^{p-t},
\end{equation}
where $\f$ is given by \eqref{function-f}.
\end{lemma}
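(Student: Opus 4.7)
The sum on the left equals $\binom{p}{t}\,\Delta^{n}\f(D,t)$, where $n=p-t$ and $\Delta$ is the forward difference in $k$; this follows from the identity $\binom{p}{k}\binom{k}{t}=\binom{p}{t}\binom{n}{k-t}$ and the substitution $j=k-t$. So, after factoring out $\binom{p}{t}$, the task reduces to showing $\Delta^{n}\f(D,t)\le (p/\sqrt{D})^{n}$.

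For $t\ge 1$ I would substitute the Beta-integral representation $\f(D,k)=\int_{0}^{D}(1-v/D)^{D-1}\frac{v^{k-1}}{(k-1)!}\,dv$ (obtained from $\f(D,k)=D^{k}B(D,k)/(k-1)!$ after the change of variable $u=1-v/D$) into the sum and interchange sum and integral. The inner alternating sum simplifies to $\frac{(-1)^{n}n!\,v^{t-1}}{(p-1)!}\,L_{n}^{(t-1)}(v)$ via the hypergeometric identity ${}_{1}F_{1}(-n;t;v)=\frac{n!(t-1)!}{(n+t-1)!}L_{n}^{(t-1)}(v)$, yielding
\begin{equation*}
\Delta^{n}\f(D,t)=\frac{(-1)^{n}n!}{(p-1)!}\int_{0}^{D}(1-v/D)^{D-1}v^{t-1}L_{n}^{(t-1)}(v)\,dv.
\end{equation*}
The boundary case $t=0$ simplifies to $\sum_{k=0}^{p}\binom{p}{k}(-1)^{p-k}\f(D,k)$ and can be handled by an analogous computation that isolates the $k=0$ term.

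The final step is to bound this integral. Since $L_{n}^{(t-1)}$ is orthogonal to the constant $L_{0}^{(t-1)}=1$ with respect to $e^{-v}v^{t-1}\,dv$ on $(0,\infty)$, the integrand $(1-v/D)^{D-1}v^{t-1}L_{n}^{(t-1)}(v)$ has vanishing integral in the limit $D\to\infty$. I would expand $(1-v/D)^{D-1}=\sum_{m=0}^{D-1}\binom{D-1}{m}(-v/D)^{m}$ and use the Laguerre moments $\int_{0}^{\infty}e^{-v}v^{t+m-1}L_{n}^{(t-1)}(v)\,dv$, which vanish for $0\le m<n$ and are explicit for $m\ge n$. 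The first surviving contribution at $m=n$ has size $\binom{D-1}{n}/D^{n}\cdot(p-1)!\le(p-1)!/n!$, and the higher-$m$ tail is controlled by a convergent series in $p^{2}/D$. Multiplying by the prefactor $n!/(p-1)!$ gives the claimed bound $(p/\sqrt{D})^{n}$. The main obstacle is the careful tracking of the tail contributions and the effect of truncating the beta expansion at $m=D-1$ rather than $\infty$; an alternative, cleaner route is to work with the closed form $\Delta^{n}\f(D,t)=(-1)^{n}\f(D,t)\cdot{}_{1}F_{1}(-n;D+t;D)$ and bound the confluent hypergeometric factor via a Cauchy estimate on the Laguerre generating function $\sum_{n\ge 0}L_{n}^{(\alpha)}(x)z^{n}=(1-z)^{-\alpha-1}e^{-xz/(1-z)}$ at radius $|z|$ of order $1/\sqrt{D}$, which makes the square-root cancellation manifest.
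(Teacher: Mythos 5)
Your opening reduction is correct and coincides with the halfway point of the paper's argument: using $\binom{p}{k}\binom{k}{t}=\binom{p}{t}\binom{p-t}{k-t}$ to rewrite the left-hand side as $\binom{p}{t}\,(\Delta^{p-t}\f(D,\cdot))(t)$ is exactly the identity the paper reaches (there via a product rule for $\Delta$ acting on $\binom{\cdot}{t}\f(\cdot)$), and your route to it is in fact shorter. The integral representation of $\f$, the Laguerre rewriting of the alternating sum, and the closed form $\Delta^{n}\f(D,t)=(-1)^{n}\f(D,t)\,{}_{1}F_{1}(-n;D+t;D)$ are all verifiably correct. The problem is that everything after that --- the only part of the lemma with real content, namely the gain of $D^{-(p-t)/2}$ --- is not established, and your primary route as described would fail. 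After you expand $(1-v/D)^{D-1}=\sum_m\binom{D-1}{m}(-v/D)^m$, the integrand no longer contains the factor $e^{-v}$ and the integral runs over $[0,D]$, so the orthogonality relations $\int_0^\infty e^{-v}v^{t+m-1}L_n^{(t-1)}(v)\,dv=0$ for $m<n$ simply do not apply to the resulting terms; the two halves of the computation do not fit together. Moreover, even granting the Laguerre moments, your own size estimate for the first surviving term is $(p-1)!/n!$, which after the prefactor $n!/(p-1)!$ gives $O(1)$, not $(p/\sqrt{D})^{n}$ --- so the decomposition you chose does not isolate the cancellation that the lemma asserts. The generating-function alternative in your last sentence is plausible (a Cauchy estimate on $|z|\asymp D^{-1/2}$ does produce the right order, up to constants and an $n!\le p^{n}$ step), but as written it is a one-sentence program, not a proof.

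For contrast, the paper's treatment of $\Delta^{q}\f$ is entirely elementary. From $\f(D,j)=\bigl(1+\tfrac{j}{D}\bigr)\f(D,j+1)$ one gets the commutation rule $\Delta\bigl[g\cdot(S^{i}\f)\bigr]=\bigl[(\Delta-\tfrac{\cdot+i}{D})g\bigr]\cdot(S^{i+1}\f)$, hence
\[
(\Delta^{q}\f)(k)=\Bigl[\bigl(\Delta-\tfrac{k+q-1}{D}\bigr)\cdots\bigl(\Delta-\tfrac{k}{D}\bigr)1\Bigr]\cdot\f(k+q).
\]
Expanding this ordered product of $q$ noncommuting binomials and using $\Delta\tfrac{k+i}{D}=\tfrac1D$, each surviving term corresponds to a partial pairing of the $q$ slots, giving the bound $\sum_{r\le q/2}\frac{q^{2r}}{2^{r}r!\,D^{r}}\bigl(\frac{k+q}{D}\bigr)^{q-2r}$; the square-root saving appears because a $\Delta$ contributes nothing unless it is paired with a factor to its right, so the dominant term is the fully paired one with $r=q/2$. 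If you want to salvage your approach, I would drop the binomial expansion of the weight and instead either (i) make the ${}_1F_1$/generating-function estimate rigorous, or (ii) compare $(1-v/D)^{D-1}$ with $e^{-v}$ multiplicatively, writing the ratio as $\exp\bigl(-\sum_{j\ge2}c_j v^{j}/D^{j-1}+v/D\bigr)$ and observing that each power of $1/D$ in the expansion of the ratio is accompanied by at most two extra powers of $v$, so one needs $r\ge n/2$ such powers before the Laguerre orthogonality stops killing the term --- which is the same pairing count as in the paper, now in analytic clothing, and still requires care near $v=D$.
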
 

\begin{proof}
We denote by $S$ the \emph{shift operator} on functions of a single variable, i.e.~$S:g(\cdot) \mapsto g(\cdot+1)$ or alternatively $(Sg)(x)=g(x+1)$. Then,
writing $\f(k):=\f(D,k)$, the left-hand side of \eqref{eq:estimate} becomes
\begin{eqnarray}
\sum_{k \geq 0}   \binom{p}{k}    \binom{k}{t} (-1)^{p-k}  \f(k)
&=& \sum_{k\geq 0} \binom{p}{k}  \left. S^k  \binom{\cdot}{t} \f(\cdot)\ (-1)^{p-k}\right|_{\ \cdot\ = 0}  \notag \\
&=& \Delta^p \left.  \binom{\cdot}{t}\f(\cdot) \right|_{\ \cdot\ = 0}
\label{cal-1},
\end{eqnarray}
where $\Delta = S-1$ denotes the \emph{forward difference operator}. 

Firstly, we will recover the well known product rule for the finite difference $\Delta$. For arbitrary functions $g,h$ of a single variable we have
\begin{eqnarray*}
\Delta \left[ g h \right]
&=& (Sg)(Sh)- g h \\
&=& [(Sg)-g] (Sh) + g \left[ (Sh) -h \right] \\ 
&=& (\Delta g) (Sh) + g(\Delta h) \\
&=& M (\Delta g \otimes S h + g \otimes \Delta h ),
\end{eqnarray*}
where $M(g\otimes h) = g h$ denotes the pointwise multiplication of functions.
In other words, we showed that
\[
\Delta M = M (\Delta \otimes S + 1 \otimes \Delta).
\]
It follows that higher powers of the forward difference operator act on products as follows:
\[
\Delta^p (gh) = (\Delta^p M) (g\otimes h)  
= \sum_{l=0}^p \binom{p}{l} \times \Delta^l g \times S^l \Delta^{p-l} h. 
\]
We used the fact that the operators of shift and the forward difference commute: $\Delta S = S \Delta$.

By applying this general formula in our particular setup we obtain:
\begin{eqnarray}
\eqref{cal-1} & = & \Delta^p \left. \binom{\cdot}{t} \f(\cdot) \right|_{\cdot = 0} \notag \\
&=& \sum_{l=0}^p \binom{p}{l} \times 
\underbrace{\left. \Delta^l \binom{\cdot}{t}  \right|_{\ \cdot\ =0} }_{(\diamondsuit)}
\times \left(S^l \Delta^{p-l}\right) \f(\cdot) \Big|_{\ \cdot\ =0} \notag \\
&=& \binom{p}{t} \left(S^t \Delta^{p-t}\right) \f(\cdot) \Big|_{\ \cdot\ =0} \notag \\
&=&\binom{p}{t} (\Delta^{p-t}  \f)(t), 
\label{cal-2}
\end{eqnarray}
where we used the following property of $(\diamondsuit)$:
\[
(\diamondsuit) = {\left. \Delta^l \binom{\cdot}{t}  \right|_{\ \cdot\ =0} } = \binom{0}{t-l} = [l=t]. 
\]
Thus our problem is reduced to estimating the quantity $(\Delta^{p-t}  \f)(t)$ appearing on the right-hand side of \eqref{cal-2}.

If $g$ is an arbitrary function of a single variable then
$\Delta$ acts on the product $g \times (S^i \f)$ as follows:
\begin{eqnarray*}
\Delta \big[g \times (S^i \f)\big] (k) &=& g(k+1) \f(k+i+1) - g(k)\f(k+i)  \\
&=& \left [ g(k+1) - \left( 1+ \frac{k+i}{D} \right)g(k) \right] \f(k+i+1) \\
&=& \left[ \left(\Delta -\frac{k+i}{D} \right)g\right] (k) \times (S^{i+1}\f)(k), \\ 
\end{eqnarray*}
where it follows from the definition \eqref{function-f} that 
\[
\f(k+i) = \left(1+ \frac{ k+i}{D}\right) \f(k+i+1).
\]
Hence inductively we get
\[
[\Delta^q  (g \times \f)](k) 
= \left[ \left( \Delta - \frac{k+q-1}{D}\right) \cdots \left( \Delta - \frac{k}{D}\right) g \right]
 \times (S^{q} \f)(k). 
\]
We are interested in the special case of this formula for $g=1$ given by the constant function:
\[
[\Delta^q   \f](k) 
= \underbrace{\left[ \left( \Delta - \frac{k+q-1}{D}\right) \cdots \left( \Delta - \frac{k}{D}\right) 1
\right]}_{(\spadesuit)} \times \f(k+q). 
\]

We will now analyze the structure of expression ($\spadesuit$).
We use the shorthand notation
\[ P_{i}=P_{i}(k) = - \frac{k+i}{D}.\]
Expression ($\spadesuit$) is a product of $q$ factors, each factor being the sum of two terms.
Let us expand this product; each of $2^q$ resulting summands is a product of the forward difference operators $\Delta$ (let us say that there are $r$ factors of this form) and expressions of the form $(P_i)_{0\leq i \leq q-1}$ (let us say that there are $q-r$ factors of this form).
Notice that these two expressions do not commute so the order of the factors is important. 
In the following we will study in detail expressions of this form.

Clearly $\Delta P_{i}= - \frac{1}{D}$ thus
a straightforward application of the product rule shows that
\begin{multline*}
\Delta P_{i(1)} \cdots P_{i(\ell)} =  
 \sum_{1\leq r\leq \ell}  
\tikz[baseline]{\node[anchor=base](n1){$\Delta$};}
P_{i(1)} \cdots
\tikz[baseline]{\node[anchor=base](n2){$P_{i(r)}$};}
\cdots P_{i(\ell)+1}  = \\ 
- \frac{1}{D} 
 \sum_{1\leq r\leq \ell}  
P_{i(1)} \cdots P_{i(r-1)} P_{ i(r+1)+1} \cdots P_{i(\ell)+1 }. 
\tikz[overlay]{
\coordinate (m1) at ($(n1)+(0,20pt)$);
\coordinate (m2) at (n2|-m1);
\draw[thick,->] (n1) -- (m1) -- (m2) -- (n2);
}
\end{multline*}
The right-hand side can be interpreted as follows: we sum over all ways of matching the forward difference operator $\Delta$ with one of the factors $P_{i(r)}$ on its right;
this factor is removed and every term $P_j$ on the right of $P_{i(r)}$ should be replaced by $P_{j+1}$. This matching has been illustrated graphically as an arrow connecting the difference operator $\Delta$ with the factor $P_{i(r)}$ on which it acts.

This observation can be extended to general products which we consider. Namely, we sum over all ways of matching $r$ difference operators $\Delta$ with factors $(P_i)$ in such a way that each operator $\Delta$ is matched with some factor $P_i$ which is on its right and each factor $P_i$ is used at most once. The factors $(P_i)$ which are matched should be removed, each unmatched factor $P_j$ (there are $q-2r$ of them) should be replaced by $P_{j+\delta}$ where $\delta$ denotes the number of factors $(P_i)$ which are matched and are on the left from $P_j$, thus $j+\delta\leq q-1$. Also, there is additional factor $\left( - \frac{1}{D}\right)^r$.

We can illustrate this by an example for $r=2$: one of the summands contributing to the product $P_2\ \Delta\ P_5\ \Delta\ P_8\ P_{13}\ P_{17}\ P_{21}\ P_{25}$ is given by:
\begin{multline*}
P_2\ \Delta\ P_5\ \Delta\ P_8\ P_{13}\ P_{17}\ P_{21}\ P_{25} = \\ \cdots+ 
 P_{2} 
\tikz[baseline]{\node[anchor=base](a1){$\Delta$};}
P_{5}
\tikz[baseline]{\node[anchor=base](b1){$\Delta$};}
P_{8}
\tikz[baseline]{\node[anchor=base](p1){$P_{13}$};}
P_{17}
\tikz[baseline]{\node[anchor=base](q1){$P_{21}$};}
P_{25}+
\cdots
\tikz[overlay]{
\coordinate (a2) at ($(a1)+(0,20pt)$);
\coordinate (p2) at (p1|-a2);
\draw[thick,->] (a1) -- (a2) -- (p2) -- (p1);
\coordinate (b2) at ($(b1)+(0,30pt)$);
\coordinate (q2) at (q1|-b2);
\draw[thick,->] (b1) -- (b2) -- (q2) -- (q1);
}
\tikz{\draw[white] rectangle (0,40pt);}= \\
\cdots+\frac{1}{D^2} P_2\ P_5\ P_8\ P_{18}\ P_{27}+ \cdots
\end{multline*}

The above two-step procedure (selecting one of $2^q$ summands with $r$ factors $\Delta$, then summing over the matching $\mathcal{P}$) can be clearly replaced by summing simply over $\mathcal{P}$ which should be a
partition of the set $[q]$ with $r$ blocks of length $2$ and $q-2r$ blocks of length $1$ (the places where the difference operator $\Delta$ occur are exactly the left elements of two-element blocks of $\mathcal{P}$). The number of such partitions $\mathcal{P}$ is equal to
\[ \frac{q (q-1) \cdots (q-2r+1) }{2^r r!} \leq \frac{q^{2r}}{2^r r!}.\] 
It follows that for $k\geq 0$
\[ \big\lvert (\spadesuit) \big\rvert 
\leq \sum_{0\leq r\leq \lfloor \frac{q}{2} \rfloor}   \frac{1}{D^r} \frac{q^{2r}}{2^r r!} \left( \frac{k+q}{D} \right)^{q-2r}.
\]

Thus for $q=p-t$ and $k=t$,
\[
\eqref{cal-2} = \binom{p}{t} (\Delta^{q}  \f)(t)=  \binom{p}{t}\ (\spadesuit) \ \F(p) 
\]
and 
\[
\big| \eqref{cal-2} \big|\leq \binom{p}{t} \sum_{0\leq r\leq \lfloor \frac{q}{2} \rfloor}  \frac{1}{D^r} \frac{p^{2r}}{2^r r!} \left( \frac{p}{D} \right)^{q-2r}=
\binom{p}{t} \sum_{0\leq r\leq \lfloor \frac{q}{2} \rfloor}  
\frac{p^q}{D^{q-r}}. 
\]
The sum on the right-hand side is dominated by its last summand multiplied by $2$, therefore 
\[
\big| \eqref{cal-2} \big|\leq 2 \binom{p}{t} \frac{p^q}{D^{q/2}}.
\]
which finishes the proof.
\end{proof} 

\begin{lemma}\label{cardinarity}
Let $h=h_p$ be a function defined on $S_p$ as in \eqref{function-ht}
and
\[
T_{p,h} = \{ \alpha \in \tilde S_p : h(\alpha) =h \}. 
\]
Then
\[
|T_{p,h}| \leq  4^{ \frac{p}{2} -1} p^{12h + 5}. 
\]
\end{lemma}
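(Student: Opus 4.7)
The plan is to prove the bound by induction on $h$, exploiting the geometry of the Cayley graph as encoded by $g_p^{(1)}$ and $g_p^{(2)}$.

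\textbf{Base case ($h=0$).} If $h_p(\alpha)=0$ then $g_p^{(1)}(\alpha)=g_p^{(2)}(\alpha)=0$, so $\alpha$ lies simultaneously on the geodesics $\id\to \pi^{-1}$ and $\id\to\pi$. Lemma~\ref{lem:Biane}\ref{item:BianeC} identifies $\alpha$ with some $\tau\in\NC_{1,2}(p)$ via $\alpha=t(\tau)$. The restriction $\alpha\in\tilde S_p$ (no fixed points) forbids singleton blocks of $\tau$, so in fact $\tau\in\NC_2(p)$; in particular $p$ must be even and
\[
|T_{p,0}|\leq |\NC_2(p)|=\Cat_{p/2}\leq \frac{4^{p/2}}{p/2+1}.
\]
A quick check for small $p$ and the estimate $\frac{4^{p/2}}{p/2+1}\leq 4^{p/2-1}p^5$ for $p$ large enough settle the base case.

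\textbf{Inductive step ($h\geq 1$).} The goal is to produce a map $\Phi:T_{p,h}\to T_{p,h-1}\times A$ with $|A|\leq p^{12}$, which immediately yields $|T_{p,h}|\leq p^{12}|T_{p,h-1}|$ and, combined with the base case, gives $|T_{p,h}|\leq 4^{p/2-1}p^{12h+5}$. Since $h\geq 1$, at least one of $g_p^{(1)}(\alpha),g_p^{(2)}(\alpha)$ is positive; without loss of generality assume $g_p^{(1)}(\alpha)\geq 1$, so $\alpha$ is off the geodesic $\id\to\pi^{-1}$. A standard Cayley-geometry argument produces a transposition $\sigma=(i,j)$ such that $g_p^{(1)}(\alpha\sigma)=g_p^{(1)}(\alpha)-1$, and moreover $g_p^{(2)}$ changes by at most $1$. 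There are $O(p^2)$ choices of $\sigma$. After this first step we may be off the geodesic $\id\to\pi$ by one more unit and we may have created a small (bounded) number of fixed points; a further bounded sequence of correcting transpositions reinstates the derangement condition and restores $g_p^{(2)}$, each transposition costing another $O(p^2)$ choices, and the cumulative cost fits in $p^{12}$. Recording the list of transpositions applied gives the value of $\Phi(\alpha)\in T_{p,h-1}\times A$, and from this record $\alpha$ can be reconstructed, so $\Phi$ is injective.

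\textbf{Main obstacle.} The delicate point is the existence and bookkeeping of the correcting procedure: one must show that a fixed number (independent of $p$ and $h$) of transpositions suffices to simultaneously reduce $g_p^{(1)}+g_p^{(2)}$ by exactly one unit while preserving the derangement property, and that the data $A$ encoding these choices embeds in $[p^{12}]$. The exponent $12$ reflects the combinatorial cost of identifying one "handle" in the pair of maps associated with $(\alpha,\pi)$ and $(\alpha,\pi^{-1})$: roughly, a constant number of transpositions, each specified by a pair of points in $[p]$, is needed to dismantle the handle on both surfaces at once and to eliminate any fixed points introduced along the way.
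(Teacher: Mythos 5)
Your overall strategy (induction on $h$, peeling off one ``handle'' at a time) is genuinely different from the paper's, but the inductive step contains a real gap, and in fact the induction as stated cannot work for odd $p$.

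First the gap you yourself flag: the entire content of the lemma is the existence of the correcting procedure, i.e.\ the claim that a \emph{bounded} number of transpositions (independent of $p$ and $h$, and small enough that the record fits into $[p^{12}]$) always suffices to decrease $h=g_p^{(1)}+g_p^{(2)}$ by exactly one while restoring the derangement condition. A single transposition changes each of $|\alpha|$, $|\alpha^{-1}\pi|$, $|\alpha^{-1}\pi^{-1}|$ by $\pm1$, so a step that lowers $g_p^{(1)}$ may raise $g_p^{(2)}$ and leave $h$ unchanged, and cutting a $2$-cycle creates two fixed points; you assert, but do not show, that these effects can always be untangled at constant cost. This is not a routine ``standard Cayley-geometry argument'' in the two-geodesic setting, and without it the proof is incomplete.

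Second, a concrete obstruction: your base case correctly shows $T_{p,0}$ is in bijection with $\NC_2(p)$, hence $T_{p,0}=\emptyset$ when $p$ is odd. But for $p=3$ one checks $h_3\big((1\,2\,3)\big)=h_3\big((1\,3\,2)\big)=1$, so $T_{3,1}\neq\emptyset$ while $T_{3,0}=\emptyset$; no injective map $\Phi:T_{3,1}\to T_{3,0}\times A$ can exist. So the inductive step, as formulated, is false for odd $p$ (and the lemma is invoked in the main text for all values of $t$, both parities). Any repair must either change the target of $\Phi$ or restart the induction at a nonempty bottom level, for which you would need a new base-case bound.

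For comparison, the paper avoids both problems by not descending one level of $h$ at a time: it applies $O(h)$ cuts and gluings to collapse $\alpha$ directly to a fixed-point-free involution $\alpha''$ (inserting an extra point when $p$ is odd), shows the genus of $\alpha''$ is at most $2h$, counts fixed-point-free involutions of given genus via the Harer--Zagier recursion ($\epsilon_g(n)\le 4^{n-1}n^{3g}$), and multiplies by $\binom{p}{2}^{3h}$ for the reconstruction data. The per-unit-of-$h$ cost $p^{12}$ in your scheme matches the paper's bookkeeping, but the paper never needs a single step that provably lowers $h$ by exactly one.
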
 
\begin{proof}
We assume for a while that $p$ is an even number. 
We define
\[
k(\alpha) = 2|\alpha| -p.
\]
It fulfills the bound
\begin{equation}
\label{bound-a}
2h(\alpha) \geq 2|\alpha| + |\pi^2| - 2p +2 = 2|\alpha| -p = k(\alpha).
\end{equation}

Let $\sigma$ be an arbitrary permutation and $\tau$ be a transposition.
If $|\tau \sigma| < |\sigma|$, we say that $\sigma':=\tau \sigma$ was obtained by a \emph{cut}. This corresponds to splitting one of the cycles of $\sigma$ into two cycles; we say that \emph{the cycle was cut}. 
If $|\tau \sigma| > |\sigma|$, we say that $\sigma':=\tau \sigma$ was obtained by a \emph{gluing}. This corresponds to merging two of the cycles of $\sigma$ into one cycle.

Let $\alpha\in T_{p,h}$ be given. If $\alpha$ has a cycle of length at least $3$, we cut it; we iterate the procedure until we obtain permutation $\alpha'$ which consists only of cycles of length $1$ and $2$.
The number of cuts we need is upper-bounded by $k(\alpha)$.
To see this,
\[
(\text{the number of cuts for $\alpha$}) \leq \sum_{c \in \alpha} (|c| -2 ) = p-2\#(\alpha) = k(\alpha).
\]
Here, $|c|$ is the cardinality of a cycle $c$ of the permutation $\alpha$.
Importantly, $h(\alpha') \leq h(\alpha)$. 

Next, we glue fixed points of $\alpha'$ pairwise to get $\alpha''$ which consists of $p$ disjoint transpositions;
in the language of partition this is a (possibly crossing) pairing. 
The number of glues is upper-bounded by $k(\alpha)/2$;
\[
(\text{the number of glues}) 
\leq \frac{1}{2}\sum_{c \in \alpha} (|c| -2 ) = \frac{1}{2} k(\alpha). 
\]
Above, we used the fact that $\alpha$ has no fixed points. 

Write $g(\alpha) = g_p^{(2)}$, which is the notation used in Theorem \ref{HZ}.
Each operation of gluing can increase the genus at most by $1$; 
furthermore $g \leq h$ so that
\[
g (\alpha'') \leq  g (\alpha') + \frac{1}{2}k(\alpha) \leq 2 h(\alpha).
\]

To summarize:
we constructed a map $\alpha\mapsto\alpha''$ where $\alpha\in T_{p,h}$, with a property that
$g(\alpha'') \leq 2h$ and $\alpha''$ is an involution without fixed points.
Furthermore, $\alpha$ can be obtained from $\alpha''$ by multiplying by
at most $3h$ transpositions; this means that the preimage of any $\alpha''$ consists of at most $\binom{p}{2}^{3h}$ elements.
Then, by using Lemma \ref{HZcor}, 
\[
|T_{ph}| \leq \binom{p}{2} ^{3h} \times \sum_{g =0}^{2h}4^{\frac{p}{2}-1} p^{3g}
\leq 4^{\frac{p}{2}-1} p^{12h + 3} 
\]
which finishes the proof in the case when $p$ is even.

When $p$ is odd number,
we select some cycle consisting of more than two elements and
remove one of these elements. 
Then, we can do the same surgeries on $\lfloor \frac{p}{2} \rfloor$ points as before. 
This time, to recover $\alpha$ from $\alpha''$ we need additional step
so that we increase the power of $p$ in the bound by $2$. 
\end{proof}

Finally we prove Lemma \ref{HZcor} by using:
\begin{thm}[Harer-Zagier formula \cite{HarerZagier1986}]\label{HZ}
Let $\alpha \in S_{2n}$ and define the genus $g$ of $\alpha$ such that
\[ 
2g (\alpha) = |\alpha| +|\alpha^{-1} \pi| - 2n+1, 
\]
where $\pi = (1,2,\ldots,2n-1,2n)$.
Then, the number of involutions without fixed points with genus $g$, 
denoted by $\epsilon_g(n)$,  has 
the following recursive formula:  
\[
(n+1)\epsilon_g(n) = 2(2n-1) \epsilon_g(n-1) + (2n-1)(n-1) (2n-3) \epsilon_{g-1}(n-2). 
\] 
Here, the boundary condition for the recurrence is given by $\epsilon_0(n) = \Cat_n$, the Catalan numbers.
\end{thm}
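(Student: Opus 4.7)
The plan is to interpret $\epsilon_g(n)$ as moments of the Gaussian Unitary Ensemble (GUE) and derive the recurrence through the matrix integral. Let $M$ be an $N\times N$ GUE matrix normalized so that $\E[M_{ij}M_{kl}]=\delta_{il}\delta_{jk}$. A direct Wick expansion of $\Tr M^{2n}$ yields
\begin{equation*}
T_n(N):=\E\Tr M^{2n} \;=\; \sum_{\alpha} N^{\#(\alpha^{-1}\pi)} \;=\; \sum_{g\geq 0}\epsilon_g(n)\, N^{n+1-2g},
\end{equation*}
where the first sum runs over fixed-point-free involutions $\alpha\in S_{2n}$ and the second identification uses the genus formula in the statement: for such $\alpha$ one has $|\alpha|=n$, hence $\#(\alpha^{-1}\pi)=n+1-2g(\alpha)$. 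The sought recurrence for $\epsilon_g(n)$ is then equivalent, upon multiplying both sides by $N^{n+1-2g}$ and summing in $g$, to the polynomial identity
\begin{equation*}
(n+1)\,T_n(N) \;=\; 2(2n-1)N\,T_{n-1}(N) \;+\; (2n-1)(n-1)(2n-3)\,T_{n-2}(N).
\end{equation*}

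To prove this polynomial identity I would use orthogonal polynomial techniques. Weyl integration reduces $T_n(N)$ to a one-dimensional integral against the Christoffel--Darboux-type kernel formed from the first $N$ normalized Hermite wavefunctions,
\begin{equation*}
T_n(N) \;=\; \sum_{k=0}^{N-1}\int_{\R}x^{2n}\,\varphi_k(x)^2\,dx.
\end{equation*}
Applying the three-term recurrence $x\varphi_k=\sqrt{k+1}\,\varphi_{k+1}+\sqrt{k}\,\varphi_{k-1}$ twice to handle the factor $x^2$, then summing in $k$ with telescoping, yields a three-term relation in $n$ whose coefficients are polynomials of low degree in $N$. Extracting the coefficient of $N^{n+1-2g}$ on each side then produces precisely the Harer--Zagier recurrence. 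The boundary case $\epsilon_0(n)=\Cat_n$ is immediate, since genus-zero fixed-point-free involutions of $[2n]$ are exactly the noncrossing pair-partitions of $2n$ points on a circle.

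The main obstacle will be the delicate bookkeeping needed to recover exactly the prefactors $2(2n-1)$ and $(2n-1)(n-1)(2n-3)$ after the Hermite manipulations, since intermediate expressions produce many cross terms that must collapse. A cleaner route that sidesteps this, and which I would pursue if the direct orthogonal-polynomial calculation becomes unwieldy, is first to establish the celebrated Harer--Zagier generating-function identity
\begin{equation*}
\sum_{n\geq 0}\frac{T_n(N)}{(2n-1)!!}\,x^n \;=\; \left(\frac{1+x}{1-x}\right)^N
\end{equation*}
by a differential-equation argument on the matrix integral, and then derive the polynomial recurrence by multiplying by $(1-x)^{N+1}$, differentiating once, and comparing coefficients of $x^n$ on both sides.
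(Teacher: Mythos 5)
First, note that the paper does not prove Theorem \ref{HZ} at all: it is imported verbatim from Harer--Zagier \cite{HarerZagier1986} as a known result, so there is no internal proof to compare against. Your strategy --- encode $\epsilon_g(n)$ into the GUE moment $T_n(N)=\E\Tr M^{2n}=\sum_g \epsilon_g(n)N^{n+1-2g}$ and prove the equivalent three-term polynomial recurrence $(n+1)T_n(N)=2(2n-1)N\,T_{n-1}(N)+(2n-1)(n-1)(2n-3)\,T_{n-2}(N)$ --- is a legitimate and indeed classical route (essentially that of Haagerup--Thorbj{\o}rnsen). The combinatorial reduction is correct: for a fixed-point-free involution $\alpha\in S_{2n}$ one has $|\alpha|=n$, so the genus definition gives $\#(\alpha^{-1}\pi)=n+1-2g$, the Wick expansion is standard, and the identification $\epsilon_0(n)=\Cat_n$ via noncrossing pair-partitions is exactly Lemma \ref{lem:Biane}. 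One can also check the target polynomial identity on small cases ($T_0=N$, $T_1=N^2$, $T_2=2N^3+N$), so you are aiming at a true statement.

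The gap is that the entire analytic content of the theorem --- deriving the three-term recurrence from the Hermite expansion $T_n(N)=\sum_{k=0}^{N-1}\int x^{2n}\varphi_k(x)^2\,dx$ --- is asserted rather than carried out. The phrase ``summing in $k$ with telescoping yields a three-term relation'' is where all the difficulty lives: a naive double application of $x\varphi_k=\sqrt{k+1}\,\varphi_{k+1}+\sqrt{k}\,\varphi_{k-1}$ produces cross terms $\int x^{2n-2}\varphi_k\varphi_{k\pm2}$ that do not close into a recurrence in $n$ alone without an additional ingredient (integration by parts using $\varphi_k'$, or equivalently a second-order ODE for the density of states), and you concede this yourself. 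Your fallback is also misstated: the Harer--Zagier generating function is $1+2\sum_{n\geq 0}\frac{T_n(N)}{(2n-1)!!}x^{n+1}=\left(\frac{1+x}{1-x}\right)^N$, not $\sum_{n\geq 0}\frac{T_n(N)}{(2n-1)!!}x^{n}=\left(\frac{1+x}{1-x}\right)^N$; as written yours already fails at order $x^0$ since $T_0(N)=N\neq 1$. So the proposal is a correct reduction plus a plausible but unexecuted plan; to stand as a proof it would need either the Hermite computation done in full or the corrected generating-function identity established and differentiated.
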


\begin{lemma}\label{HZcor}
We have
\[
\epsilon_g (n) \leq 4^{n-1} n^{3g}
\]
for $g \geq 0$. 
\end{lemma}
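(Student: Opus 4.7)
The plan is to prove the bound by induction on $g$ (and then on $n$) using the Harer--Zagier recurrence from Theorem \ref{HZ}, with $\epsilon_0(n)=\Cat_n$ providing the base case.

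For $g=0$ the claim reduces to $\Cat_n\leq 4^{n-1}$. Writing $\Cat_n=\binom{2n}{n}/(n+1)\leq 4^n/(n+1)$, this is immediate for $n\geq 3$ (since then $n+1\geq 4$), and for $n=1,2$ it is checked by hand from $\Cat_1=1,\ \Cat_2=2$.

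For $g\geq 1$, I proceed by induction on $n$, assuming the bound is already known for $\epsilon_g(n-1)$ and for $\epsilon_{g-1}(n-2)$. Substituting into the recurrence
\[
(n+1)\epsilon_g(n)=2(2n-1)\,\epsilon_g(n-1)+(2n-1)(n-1)(2n-3)\,\epsilon_{g-1}(n-2)
\]
and dividing by $4^{n-3}$, it suffices to establish
\[
8(2n-1)(n-1)^{3g}+(2n-1)(n-1)(2n-3)(n-2)^{3g-3}\leq 16(n+1)\,n^{3g}.
\]
Crudely bounding $(n-1)^{3g}\leq n^{3g}$ in the first summand gives $8(2n-1)n^{3g}$, while $(2n-1)(2n-3)\leq 4n^{2}$, $n-1\leq n$, and $(n-2)^{3g-3}\leq n^{3g-3}$ bound the second summand by $4n^{3g}$. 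Their sum is $(16n-4)n^{3g}$, which is plainly at most $16(n+1)n^{3g}$; this closes the induction.

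The only potentially delicate point is the handling of small values of $n$ (e.g. $n=1,2$), where the recurrence references $\epsilon$'s with arguments $0$ or $1$. In these cases the values are either $0$ or $1$ and the desired bound $4^{n-1}n^{3g}$ is trivial, so no genuine obstacle arises. I therefore expect the whole argument to be essentially the routine induction sketched above; the bulk of the work is the one inequality between polynomials displayed above.
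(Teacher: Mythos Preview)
Your proof is correct and follows the same high-level strategy as the paper: induction on $g$ via the Harer--Zagier recurrence, with the Catalan bound as base case. The difference is in how the $n$-dependence is handled. The paper first simplifies the recurrence to $\epsilon_g(n)\leq 4\epsilon_g(n-1)+4(n-1)^2\epsilon_{g-1}(n-2)$, then \emph{unrolls} it completely in $n$ to obtain $\epsilon_g(n)\leq\sum_{j} 4^{n-j} j^2\,\epsilon_{g-1}(j-1)$, applies the inductive hypothesis at level $g-1$, and bounds the resulting sum $\sum_j j^{3g-1}$ by the integral $\int_0^n x^{3g-1}\,dx=n^{3g}/(3g)$. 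You instead run a nested induction on $n$ inside the induction on $g$, applying both hypotheses at once and reducing everything to the single polynomial inequality $(16n-4)n^{3g}\leq 16(n+1)n^{3g}$. Your route is slightly more elementary (no integral estimate) and keeps the full recurrence rather than coarsening it first; the paper's unrolling makes the role of the exponent $3g$ more transparent. Your remark about small $n$ is the right way to close the argument: take $n=1,2$ as base cases for the inner induction and verify them directly (so that the recurrence is only invoked for $n\geq 3$, where $n-2\geq 1$ and the hypothesis $\epsilon_{g-1}(n-2)\leq 4^{n-3}(n-2)^{3g-3}$ is available).
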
 
\begin{proof}
First, the bound is true for $g=0$. This comes from the fact that
\[
\epsilon_0(1) = \Cat_1 =1 \quad\text{ and }\quad
(n+1) \Cat_n = 2(2n-1)\Cat_{n-1}.
\]  
Next, we assume that the bound is true for $g-1$. Theorem \ref{HZ} implies 
\begin{eqnarray*}
\epsilon_g (n) 
&\leq& 4\epsilon_g(n-1) + 4(n-1)^2 \epsilon_{g-1}(n-2)  \\
&\leq& 4 \left[  4\epsilon_g(n-2) + 4(n-2)^2 \epsilon_{g-1}(n-3) \right] + 4(n-1)^2 \epsilon_{g-1}(n-2) \\
&\leq& \sum_{j=0}^{n-1} 4^{n-j} j^2 \epsilon_{g-1} (j-1). 
\end{eqnarray*}
By the induction hypothesis,
\begin{eqnarray*}
\epsilon_g (n) 
&\leq& \sum_{j=0}^{n-1} 4^{n-j} j^2  4^{j-2} (j-1)^{3g-3} \\
&\leq& 4^{n-2} \sum_{j=0}^{n-1} j^{3g-1} 
\leq 4^{n-2} \int_{0}^n x^{3g-1} dx
\leq 4^{n-1} n^{3g}.
\end{eqnarray*}
\end{proof}

\section{Measure concentration} \label{measure-concentrate} 
In this section we collect some results from asymptotic geometric analysis and random matrix theory,
which we use in this paper. 

The following theorem states that 
continuous functions on high-dimensional unit spheres 
have almost constant value except for sets of small measure: 
\begin{lemma}[Levy's lemma \cite{Levy1951}]\label{Levy-lemma}
Let $f: \mathbb S^k \rightarrow \mathbb R $ be a function with Lipschitz constant $L$, 
then there exists a constant $0<c_0<1$ and
\[
\Pr \{x \in \mathbb S^k : |f(x) - {\rm Median}[f]| \geq \epsilon  \} 
< \exp \left\{ \frac{ - c_0 (k-1)\epsilon^2}{L^2} \right\}.  
\] 
\end{lemma}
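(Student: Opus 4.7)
The plan is to derive Lévy's lemma from the spherical isoperimetric inequality together with the Lipschitz property of $f$. The argument is classical but I sketch the three steps I would follow.

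\textbf{Step 1: concentration of half-spheres.} First I would invoke (from a standard reference on the concentration of measure phenomenon, e.g.\ Ledoux or Milman--Schechtman) the following consequence of the spherical isoperimetric inequality: if $A\subseteq \mathbb S^k$ is Borel with $\Pr(A)\geq \tfrac{1}{2}$ and $A_\eta:=\{x\in\mathbb S^k:\dist(x,A)\leq \eta\}$ denotes its geodesic $\eta$-neighborhood, then
\[
\Pr\!\bigl(\mathbb S^k\setminus A_\eta\bigr)\leq \exp\!\bigl(-c_0(k-1)\eta^2\bigr)
\]
for some absolute constant $c_0\in(0,1)$. This is the main (and only genuinely nontrivial) input; I would quote it rather than reproduce its proof.

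\textbf{Step 2: reduction via the median.} Introduce the two level sets
\[
A_-:=\{x\in\mathbb S^k:f(x)\leq {\rm Median}[f]\},\qquad A_+:=\{x\in\mathbb S^k:f(x)\geq {\rm Median}[f]\}.
\]
By the definition of the median, both sets have measure at least $\tfrac{1}{2}$. If $x\in\mathbb S^k$ satisfies $f(x)>{\rm Median}[f]+\epsilon$, then for every $y\in A_-$ the $L$-Lipschitz bound gives
\[
\epsilon<f(x)-f(y)\leq L\,\dist(x,y),
\]
so $\dist(x,A_-)>\epsilon/L$. Hence $\{f>{\rm Median}[f]+\epsilon\}\subseteq \mathbb S^k\setminus(A_-)_{\epsilon/L}$, and symmetrically $\{f<{\rm Median}[f]-\epsilon\}\subseteq\mathbb S^k\setminus (A_+)_{\epsilon/L}$.

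\textbf{Step 3: putting it together.} Applying Step 1 to $A_-$ with $\eta=\epsilon/L$ bounds the upper tail by $\exp(-c_0(k-1)\epsilon^2/L^2)$, and the same bound applied to $A_+$ handles the lower tail. Adding the two estimates (a trivial union bound) yields the desired inequality once the resulting factor $2$ is absorbed into the constant, for instance by replacing $c_0$ with $c_0/2$ and working in the regime $\epsilon^2(k-1)/L^2\geq 2\log 2 / c_0$ (the remaining range is trivial since the right-hand side is then larger than $1$).

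\textbf{Main obstacle.} The genuinely hard ingredient is the spherical isoperimetric inequality invoked in Step 1, whose proof requires either Schmidt's two-point symmetrization or Gromov's geometric argument. Once it is accepted as a black box, everything else is a routine ``Lipschitz-reduces-to-sets'' manipulation, together with careful bookkeeping of the absolute constants.
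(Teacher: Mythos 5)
The paper does not prove this lemma at all: it is stated in the appendix ``Measure concentration,'' which explicitly only \emph{collects} known results, and the lemma is attributed to L\'evy \cite{Levy1951}. So there is no internal proof to compare against. Your argument is the standard derivation (spherical isoperimetry $\Rightarrow$ concentration of enlargements of half-measure sets $\Rightarrow$ reduction via the two median level sets $A_\pm$ and the Lipschitz bound $\Rightarrow$ union bound), and Steps 1 and 2 are correct as written; the only genuinely nontrivial input is the isoperimetric inequality, which you rightly treat as a black box.

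One point in Step 3 is not literally correct: you claim that in the regime $\epsilon^2(k-1)/L^2 < 2\log 2/c_0$ the inequality is ``trivial since the right-hand side is then larger than $1$.'' It is not: $\exp\bigl(-c_0(k-1)\epsilon^2/L^2\bigr)\leq 1$ for every $\epsilon\geq 0$, with equality only at $\epsilon=0$, so for small positive $\epsilon$ the desired bound is a genuine (if weak) statement and cannot be dismissed this way. Absorbing the factor $2$ from the union bound uniformly in $\epsilon$ therefore needs either a sharper one-sided isoperimetric estimate (the classical bound carries a prefactor $\sqrt{\pi/8}<1/2$ per tail, which makes the total prefactor at most $1$) or an explicit case analysis for small $\epsilon$. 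Alternatively, note that a harmless multiplicative constant in front of the exponential is all the paper actually needs: in the proof of Theorem \ref{convergence-speed} the lemma is invoked only to produce \emph{some} $\tilde c_0>0$ in the regime of large $l$, where any fixed prefactor can be absorbed. So the gap is one of bookkeeping, not of substance.
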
  

It is well-known that singular values of random matrices of Gaussian entries are 
asymptotically concentrated \cite{MarcenkoPastur1967}. 
For example, a sharp estimate is found in the proof of Lemma 7.3 in \cite{HT03}
and we restate it in our setting: 
\begin{lemma}[Haagerup, Thorbj{\o}rnsen]\label{HT-lemma}
Let $\frac{mn}{l} = a_{l} $ and for $\epsilon > 0$ we have
\[
\Pr \left\{\lambda_{\max} (\tilde\rho_{l}) 
\geq \frac{1}{l} \left[ \left(\sqrt{a_l^{-1}} +1\right)^2 +\epsilon \right] \right\}
\leq a_l l \exp \left\{ -\frac{a_ll\epsilon^2}{4(a_l^{-1} +1)} \right\}. 
\]
Here, $\tilde \rho_l = \frac{W_l}{lmn}$ and 
$\lambda_{\max}(\tilde \rho)$ is the maximum eigenvalue of $\tilde \rho$.  
\end{lemma}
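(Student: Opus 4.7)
The plan is to reduce the bound to a concentration statement for the operator norm of the underlying complex Gaussian matrix. Writing $\tilde\rho_l = W_l/(lmn) = GG^*/(lmn)$, where $G$ is the $(mn)\times l$ complex Gaussian matrix with independent $N_{\mathbb C}(0,1)$ entries used in the construction of $\rho$, we have
\[
\lambda_{\max}(\tilde\rho_l) = \frac{\|G\|_\infty^2}{lmn},
\]
so it suffices to control $\|G\|_\infty$ from above.

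The two ingredients I would combine are: (i) Gaussian concentration of Lipschitz functions, and (ii) a mean bound for $\|G\|_\infty$. For (i), note that the map $G\mapsto \|G\|_\infty$ is $1$-Lipschitz with respect to the Hilbert--Schmidt norm; since $G$ has $2lmn$ independent real Gaussian coordinates of variance $1/2$, the Borell--Tsirelson--Ibragimov--Sudakov inequality (equivalently, Levy's lemma of our Lemma \ref{Levy-lemma} transferred to Gaussian space) yields
\[
\Pr\bigl\{\|G\|_\infty \geq \mathbb{E}\|G\|_\infty + s\bigr\} \leq \exp(-s^2).
\]
For (ii), the Gordon--Slepian comparison inequality for Gaussian matrices gives the classical bound $\mathbb{E}\|G\|_\infty \leq \sqrt{mn}+\sqrt{l}$.

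Combining (i) and (ii), for any $s>0$,
\[
\Pr\Bigl\{\|G\|_\infty^2 \geq \bigl(\sqrt{mn}+\sqrt{l}+s\bigr)^2\Bigr\} \leq \exp(-s^2).
\]
Dividing through by $lmn$ and using $(\sqrt{mn}+\sqrt{l})^2/(lmn) = (\sqrt{a_l^{-1}}+1)^2/l$ converts the threshold into the form of the lemma; choosing $s$ so that $2s(\sqrt{mn}+\sqrt{l})+s^2 = lmn\epsilon$ translates the Gaussian tail into the stated exponent. Alternatively, since the paper develops the moments of Wishart matrices in detail (Lemma \ref{lem:key-lemma}), one may equally well run the trace/moment method: bound $\lambda_{\max}(W_l)^p \leq \trace W_l^p$, compute $\mathbb{E}\trace W_l^p$ as a sum over $S_p$ via Wick's formula, group by genus using the Harer--Zagier estimate (Theorem \ref{HZ}, already invoked in Lemma \ref{HZcor}), and optimize $p\sim mn$.

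The main obstacle is matching the exact form of the bound, in particular the prefactor $a_l l = mn$ and the denominator $4(a_l^{-1}+1)$ in the exponent. The prefactor $mn$ is naturally produced either by a net argument in the Gordon inequality route or by the ambient dimension appearing in the trace method; the denominator emerges from the local Lipschitz constant of the squaring map near the mean, i.e.\ from the factor $\sqrt{mn}+\sqrt{l}$ which, once squared into the $\tilde\rho_l$ scaling, contributes $\sqrt{mn}+\sqrt{l}\,\propto\, \sqrt{lmn(a_l^{-1}+1)}$ in the denominator. Tracking these factors carefully yields the stated bound.
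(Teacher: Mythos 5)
First, a remark on what the paper itself does: Lemma \ref{HT-lemma} is not proved in the paper at all --- it is imported from the proof of Lemma 7.3 of \cite{HT03}, where it is obtained by a Chernoff/Laplace-transform argument: one bounds $\Pr\{\lambda_{\max}(W)\ge x\}\le e^{-sx}\,\E\Tr e^{sW}$ using Haagerup--Thorbj{\o}rnsen's exact formula for $\E\Tr e^{sW}$ and optimizes over $s$; that is where the dimensional prefactor $a_l l=mn$ and the denominator $4(a_l^{-1}+1)$ come from. Your route through Gaussian concentration plus Gordon's inequality is a genuinely different strategy, and its two ingredients are individually correct: $G\mapsto\|G\|_\infty$ is $1$-Lipschitz in the Hilbert--Schmidt norm and the real coordinates have variance $1/2$, so Borell--TIS gives $\Pr\{\|G\|_\infty\ge\E\|G\|_\infty+s\}\le e^{-s^2}$; and $\E\|G\|_\infty\le\sqrt{mn}+\sqrt{l}$ holds in the complex case as well (view the norm as the supremum of the real Gaussian process $(u,v)\mapsto\Re\langle Gv,u\rangle$).

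The genuine gap is in your last step: ``tracking these factors carefully yields the stated bound'' is false. Solving $2s(\sqrt{mn}+\sqrt l)+s^2=mn\epsilon$ forces $s\le mn\epsilon/\bigl(2(\sqrt{mn}+\sqrt l)\bigr)$, hence
\[
s^2\le\frac{(mn)^2\epsilon^2}{4(\sqrt{mn}+\sqrt l)^2}=\frac{mn\,\epsilon^2}{4\bigl(1+\sqrt{a_l^{-1}}\bigr)^2},
\]
whereas the lemma demands the exponent $mn\,\epsilon^2/\bigl(4(1+a_l^{-1})\bigr)$. Since $(1+\sqrt{x})^2=1+2\sqrt{x}+x>1+x$ for $x>0$, the exponent you obtain is \emph{strictly smaller} than the one stated (for $a_l=1$ it is off by a factor of $2$), so this scheme proves a strictly weaker tail bound; no bookkeeping within it recovers the constant $4(a_l^{-1}+1)$. (The prefactor is not an issue --- your prefactor $1$ is better than $a_l l$.) The defect is only in the constant: your bound has the right form and would in fact suffice for the sole application of the lemma in the paper, Corollary \ref{cor-HT}, which needs only some exponential decay in $l$. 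But as a proof of the lemma as stated it falls short; to get the stated constant you would need to follow \cite{HT03}, or carry out your alternative moment-method suggestion with sharp control of the constants after optimizing $p$.
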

A similar phenomenon is also true even when they are normalized to have the Hilbert-Schmidt norm one, 
and we have the following corollary:
\begin{corollary} \label{cor-HT}
Let $\frac{mn}{l} = a_{l} $, which converges. Then, there exist two constants $c,C>0$ such that 
\[
\Pr \left\{\lambda_{\max} (\rho_{l}) \geq \frac{C}{l} \right\}
\leq \exp \left\{ -cl \right\}. 
\]
for large enough $l\in\N$. Here, $\rho_l = \frac{W_l}{\trace W_l}$.
\end{corollary}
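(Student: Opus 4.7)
The plan is to factor $\rho_l$ through $\tilde\rho_l$ and then control the multiplicative fluctuation of the trace separately from the spectral bound supplied by Lemma \ref{HT-lemma}. Explicitly, writing
\[
\rho_l = \frac{W_l}{\Tr W_l} = \frac{lmn}{\Tr W_l}\,\tilde\rho_l,
\]
one obtains
\[
\lambda_{\max}(\rho_l) = \frac{lmn}{\Tr W_l}\,\lambda_{\max}(\tilde\rho_l),
\]
so it suffices to show that both factors on the right-hand side are bounded by constants of the correct order, each with probability at least $1-\exp(-c\,l)$ for an appropriate positive $c$.

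For the spectral factor I would invoke Lemma \ref{HT-lemma} with a fixed $\epsilon$, say $\epsilon=1$. Since $a_l\to a$ is bounded, $(\sqrt{a_l^{-1}}+1)^2+1$ stays bounded by some absolute constant $C'$ (for $l$ large enough), and thus
\[
\Pr\!\left\{\lambda_{\max}(\tilde\rho_l) \geq \frac{C'}{l}\right\} \leq a_l\, l\, \exp\!\left\{-\frac{a_l\, l}{4(a_l^{-1}+1)}\right\} \leq \exp\{-c'\, l\}
\]
for some constant $c'>0$ and all sufficiently large $l$, the polynomial prefactor $a_l l$ being absorbed into the exponential at the cost of enlarging the constant slightly.

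For the multiplicative trace factor I would exploit the fact that $2\Tr W_l \sim \chi^2(2lmn)$, so $\Tr W_l$ concentrates sharply around its mean $lmn$. Standard chi-square deviation inequalities (e.g.\ Laurent--Massart, or a direct Bernstein bound) yield
\[
\Pr\!\left\{\Tr W_l \leq \tfrac{1}{2}\,lmn\right\} \leq \exp\{-c''\, lmn\} \leq \exp\{-c''\, l\}
\]
for some $c''>0$, where in the last inequality we used $mn\geq 1$. Therefore with probability at least $1-\exp(-c''\,l)$ one has $\frac{lmn}{\Tr W_l}\leq 2$.

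Finally I would take a union bound over the two rare events above to conclude that with probability at least $1-\exp(-c\,l)$ (for a sufficiently small $c>0$) both $\frac{lmn}{\Tr W_l}\leq 2$ and $\lambda_{\max}(\tilde\rho_l)\leq C'/l$ hold simultaneously, giving $\lambda_{\max}(\rho_l)\leq 2C'/l$; setting $C:=2C'$ completes the proof. I do not anticipate a serious obstacle here: the only point that requires minor care is ensuring that the chi-square tail decays at least as fast as $\exp(-cl)$ (which is automatic since $lmn\geq l$) and that the polynomial prefactor in Lemma \ref{HT-lemma} is swallowed by the exponential, which again is routine.
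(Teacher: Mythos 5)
Your argument is correct and is essentially the intended one: the paper states Corollary \ref{cor-HT} without proof, and the natural derivation is exactly your factorization $\rho_l=\frac{lmn}{\Tr W_l}\tilde\rho_l$, combining Lemma \ref{HT-lemma} for $\lambda_{\max}(\tilde\rho_l)$ with lower-tail concentration of the $\chi^2(2lmn)$ variable $2\Tr W_l$ and a union bound. All the individual estimates you invoke are standard and the polynomial prefactor $a_l l$ is indeed harmless.

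One point deserves an explicit mention. Your step ``$(\sqrt{a_l^{-1}}+1)^2+1$ stays bounded'' requires $a_l^{-1}$ to be bounded, i.e.\ the limit $a=\lim a_l$ must be strictly positive. If $a=0$ then $\frac{1}{l}(\sqrt{a_l^{-1}}+1)^2\sim\frac{1}{mn}\gg\frac{1}{l}$, and in fact the conclusion of the corollary itself fails in that regime (the largest eigenvalue of $\rho_l$ lives at scale $1/(mn)$, not $1/l$). So $a>0$ is a hypothesis that must be read into the statement; given that, your proof is complete.
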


\section*{Acknowledgments}

We thank the anonymous referee 
who not only suggested use Wick formula (instead of Weingarten calculus which was used in the preliminary version of the current paper) 
but also gave other useful comments. 

The research of M.F. was financially supported by the CHIST-ERA/BMBF project CQC. 

In the initial phase of research, P.\'S.~was a holder of a fellowship of \emph{Alexander von Humboldt-Stiftung}.
P.\'S.'s research has been supported by a grant of \emph{Deutsche Forschungsgemeinschaft} (SN 101/1-1).


\end{document}